\newtheorem{theorem}{\rm\bf Theorem}[section]
\newtheorem{proposition}[theorem]{\rm\bf Proposition}
\newtheorem*{theorem*}{Theorem}
\newtheorem*{theorem 1}{\rm\bf Proposition 1}
\newtheorem*{theorem 2}{\rm\bf Proposition 2}
\newtheorem*{conj*}{Conjecture}
\theoremstyle{definition}
\theoremstyle{remark}
\newtheorem{remark}[theorem]{\rm\bf Remark}
\begin{document}
	\author	{S. Ghersheen}
	\author{V. Kozlov}
	\author{U. Wennergren}
	\title [Specifics of coinfection and it's dynamics] {Specifics of coinfection and it's dynamics}


	\maketitle

		\begin{abstract}
				It is essential to understand the dynamics of epidemics in the presence of coexisting pathogens. There are various phenomenon that can effect the dynamics. In this paper, we formulate a mathematical model using different assumptions  to capture the effect of these additional phenomena such as partial cross immunity, density dependence in each class and a role of recovered population in the dynamics. We found the basic reproduction number for each model which is the threshold that describes the invasion of disease in population. The basic reproduction number in each model shows that the persistence of disease or strains depends on the carrying capacity. In the model of this paper, we present the local stability analysis of the  boundary equilibrium points and observed that the recovered population is not uniformly bounded with respect to $K$.
	
	\end{abstract}	
	\keywords{SIR model, coinfection, carrying capacity, global stability}

		\section{Introduction}
		The evolution and epidemiology of parasite virulence has been the main research focus from many years. The aim is to study  and understand different traits like persistence, interaction, role of immunity and treatment of diseases. Each of this trait has its own importance and with the time, due to the (emergence and evolution) of new pathogens, is becoming more complicated to understand. The most complicated issue is understanding how parasite diversity within individual hosts alters the selective forces determining virulence. We are interested to analyse the dynamics of when the host is infected by the different specie of parasite or with the number of different parasite species at a time. This phenomenon commonly known as coinfection.
		Coinfection of different pathogen species or different pathogen has not been completely understandable yet due to its complexities. In order to provide the deeper understanding of  evolution and dynamics of multiple pathogens virulence researchers are focused now on the interaction of these parasite species and the consequences of  their emergence in the host. We have analysed, in \cite{SKTW18a}, \cite{SKTW19}, \cite{SKTW21a} and \cite{SKTW21b},  models for coinfection under different assumptions on the interaction between strains to understand the role of crucial parameters in the dynamics. Since Multi-strain pathogens impose a substantial burden of morbidity and mortality \cite{Gupta}.There are many studies \cite{Allen,Martcheva,Blyuss} which explains the severity of coinfection in general. Also in \cite{Castillo2,Kawaguchi,Alemu,Chaves} several models of disease specific coinfection has been studies.Some ting more
		 In this paper, the aim is to address the some possible situations which are common in the case of single infection but are not been studied in detail yet in the case of  coinfection.e discuss the effect of density dependence in each class and influence of immunity and partial cross immunity on disease dynamics in coinfection case.
		In the first model we consider that each class density is restricted by carrying capacity $K$ . This model is an extension of our previous models in \cite{SKTW18a,SKTW19} and involves many complexities. However to make it analytically solvable, we have certain assumption in this case. We assume that the recovered population is completely immune from disease or strains upon the recovery and also the new born in recovered class have life long  immunity.
		In the next model we consider that the new born in the recovered class are not immune. They born as a susceptible and become the part of susceptible population which is realistic in the case of many diseases.
		One of the crucial features in multi pathogen coexistence is the presence of cross immunity, whereby infection by one strain induces partial immunity or complete cross immunity towards future. We discuss the case of complete cross immunity in the first model and in model 4, we relax this assumption and consider the case of partial cross immunity. This bring more complexity in the dynamics.
		In all our previous and present models we consider the mass action type transmission to diminish further complexity. It is interesting to see the role bilinear  incidence rates in disease dynamics which has been considered in one of the models given in this paper.
		We observed that the  dynamics of disease depends on carrying capacity in each model.

	\section{Formulation of the models}\label{sec:model}

In this section we formulate a model including coinfection of two infectious agents in population. The model given below is an SIR model where we consider density dependence in each class.
	
		We formulate an SIR model with the recovery of each class and assume that infected and recovered populations can reproduce. We also assume that a susceptible individual can be infected with both strains or any of the single strain as a result of contact with coinfected person. Moreover coinfection can occur as a result of contact between two single infection individuals or coinfected and single infected individuals. However, to make model analytically solvable, we assume that the recovered class is completely immune after recovery and we consider pseudo-mass action incident rates or bilinear incident rate and model then is given by the following system of differential equations
	\begin{equation}\label{Model}
	\begin{split}
	S' &=\biggl(b(1-\frac{N}{K})-\alpha_1I_1-\alpha_2I_2-(\beta_1+\beta_2+\alpha_3)I_{12}-\mu_0\biggr)S,\\
	I_1' &=\left(b(1-\frac{N}{K})+\alpha_1S - \eta_1I_{12}-\gamma_1I_2 - \mu_1\right)I_1+\beta_1SI_{12},\\
	I_2' &=\left(b(1-\frac{N}{K})+\alpha_2S - \eta_2I_{12}-\gamma_2I_1- \mu_2\right)I_2+\beta_2SI_{12},\\
	I_{12}' &=\left(b(1-\frac{N}{K})+\alpha_3S+ \eta_1I_1+\eta_2I_2-\mu_3\right)I_{12}+(\gamma_1+\gamma_2)I_1I_2, \\
	R' &=\left(b(1-\frac{N}{K})-\mu_4'\right)R+\rho_1 I_1+\rho_2I_2+\rho_3 I_{12},
	\end{split}
	\end{equation}
	where $S$ represents the susceptible class, $I_1$ and $I_2$ are infected classes from strain 1 and strain 2 respectively, $I_{12} $ represents co-infected class, $R$ represents the recovered class. Also,
	$$N=S+I_1+I_2+I_{12}+R
	$$
	is the total population. We use the following parameters
	\begin{itemize}
		\item
		$b$ is the birthrate of each class  
		\item
		$K$ is carrying capacity of class ;
		\item
		$\rho_i$ is recovery rate from each infected class ($i=1,2,3$);
		\item
		$\beta_i$ is the rate of transmission of single infection from coinfected class ($i=1,2$);
		\item
		$\gamma_i$ is the rate at which infected with one strain get infected with  the other strain and move to coinfected class ($i=1,2$);
		\item
		$\mu_i' $ is death rate of each class,  $( i=1,2,3,4)$ and $\mu_i=\rho_i+\mu_i', i=1,2,3$;
		\item
		$\alpha_1$, $\alpha_2$, $\alpha_3$  are rates of transmission of strain 1, strain 2 and both strains (in the case of coinfection),
		\item
		$\eta_i$ is rate at which infected from one strain getting  infection from co-infected class $( i=1,2).$
	\end{itemize}

The possibility of simultaneous transmission from a single contact with a dually infected individual, which we model according to the assumption $\alpha_3 > 0$. An underlying assumption in the model is that individuals in all disease states have the same contact rate; we do not assume that individuals have fewer contacts if they are infected. Moreover, it is reasonable to assume that  birth rate $b>\mu_j,~j=1,2,3,4,$ otherwise population dies out quickly.
 We also assume here the case when the death rate of recovered population is greater than the death rate of susceptible population, i.e

 \begin{equation}\label{assum2}
 \mu_0< \mu_4'<\mu_j', j=1,2,3.
 \end{equation}
Since, in this model we assume the complete cross immunity therefore, the death rate of recovered class is relatively lower than other infected classes. It is also reasonable to assume that the death rates due to coinfection are greater than the death rate due to single infection due to the severity of coinfection.

 Moreover,  we assume that
 \begin{equation}
 \sigma_1< \sigma_2< \sigma_3,
 \end{equation}
 where
 $$ \sigma_k=\frac{\mu_k-\mu_0}{\alpha_k},~k=1,2,3$$

 Another important parameter for our study is the modified carrying capacity defined by
 $$S^{**}=\frac{K}{b}(b-\mu_0).$$
 which directly dependent on $K$ and a key parameter in dynamics.
 \begin{remark}
 	Note that we are not considering here the case when $$
 	\mu_4'< \mu_0<\mu_j', j=1,2,3.
 	$$
 	Certainly it is also important to consider that case but since, course of infection can make the individual's immune system weak therefore it is possible that their death rate is higher than the susceptible ones. We leave the more detail analysis of this case to the future work.
 \end{remark}

 \section{Boundedness of solutions}\label{5}
 In this section first, we show that   the solution of system \eqref{Model} is  bounded and next, we show some important estimates for  boundedness of total population $N.$
 \begin{proposition}
 	If $(S,I_1,I_2,I_{12},R)(t)$ is a solution of \eqref{Model} with $S(0)$ positive then
 	\begin{equation}\label{colo}
 	S(t) \leq \frac{1}{\frac{1}{S^{**}}(1-e^{-(b-\mu_0)t})+\frac{1}{S(0)}e^{-(b-\mu_0)t}} .
 	\end{equation}
 	In particular,
 	\begin{equation}\label{boundr1}
 	S(t)\leq \max\{ S^{**},S(0)\}
 	\end{equation}
 	and
 	\begin{equation}\label{limit}
 	\limsup_{t\rightarrow \infty} S(t)\leq  S^{**}.
 	\end{equation}
 \end{proposition}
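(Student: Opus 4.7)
The plan is to reduce the $S$-equation to a logistic differential inequality and then compare it with the explicit solution of the associated logistic ODE. The bound \eqref{colo} will be exactly that explicit solution, while \eqref{boundr1} and \eqref{limit} will follow from elementary analysis of the resulting closed form.

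\textbf{Step 1: a logistic upper bound for $S$.} First observe that $I_1, I_2, I_{12}, R$ remain non-negative along the flow: the right-hand sides of the $I_j$-equations factor through the component itself plus non-negative source terms (e.g.\ $\beta_1 S I_{12}$, $(\gamma_1+\gamma_2)I_1 I_2$), and the $R$-equation is non-negative at $R=0$ whenever $I_1, I_2, I_{12}\ge 0$. Consequently $N \ge S$. Dropping the non-positive contact contributions in the $S$-equation and replacing $N$ by $S$ on the right yields
$$S' \;\le\; \Bigl(b-\mu_0 - \frac{b}{K}S\Bigr)S \;=\; (b-\mu_0)\,S\Bigl(1-\frac{S}{S^{**}}\Bigr),$$
where I used the identity $S^{**}=K(b-\mu_0)/b$.

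\textbf{Step 2: solve the logistic majorant.} Let $u$ solve $u'=(b-\mu_0)u(1-u/S^{**})$ with $u(0)=S(0)$; the standard ODE comparison theorem then gives $S(t)\le u(t)$. Setting $v=1/u$ linearizes the equation into $v' + (b-\mu_0)v = (b-\mu_0)/S^{**}$, whose solution with $v(0)=1/S(0)$ is
$$v(t) \;=\; \frac{1}{S^{**}}\bigl(1-e^{-(b-\mu_0)t}\bigr) + \frac{1}{S(0)}\,e^{-(b-\mu_0)t}.$$
Inverting reproduces exactly the right-hand side of \eqref{colo}.

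\textbf{Step 3: the two consequences.} The coefficients $1-e^{-(b-\mu_0)t}$ and $e^{-(b-\mu_0)t}$ are non-negative and sum to $1$, so $v(t)$ is a convex combination of $1/S^{**}$ and $1/S(0)$ and therefore lies between $\min\{1/S^{**},1/S(0)\}$ and $\max\{1/S^{**},1/S(0)\}$. Inverting yields $u(t)\le \max\{S^{**},S(0)\}$, which combined with $S\le u$ proves \eqref{boundr1}. For \eqref{limit}, the assumption $b>\mu_0$ forces $e^{-(b-\mu_0)t}\to 0$, so $v(t)\to 1/S^{**}$ and hence $u(t)\to S^{**}$, giving $\limsup_{t\to\infty}S(t)\le S^{**}$.

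I do not see any deep obstacle here; the crux is simply recognizing the logistic structure that emerges once $N\ge S$ is used to absorb all the negative forcing. The only subtle point is the preliminary non-negativity check for $(I_1,I_2,I_{12},R)$ that underwrites the inequality $N\ge S$, and this is routine from the factored form of the system.
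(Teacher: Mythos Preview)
Your proof is correct and follows essentially the same route as the paper: both derive the logistic differential inequality $S'\le (b-\mu_0)S-\tfrac{b}{K}S^2$ from the first equation (using non-negativity of the other compartments, which you make explicit and the paper leaves implicit) and then integrate it via the Bernoulli substitution $v=1/S$. Your convex-combination argument for \eqref{boundr1} and \eqref{limit} is a clean way to unpack what the paper simply calls ``direct consequences.''
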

 \begin{proof}
 	It follows from the first equation of \eqref{Model}
 	\begin{equation*}
 	S'-(b-\mu_0)S \leq-\frac{bS^2}{K},
 	\end{equation*}
 	which can be written as
 	\begin{equation*}
 	(Se^{-(b-\mu_0)t})' \leq - \frac{b}{K} e^{-(b-\mu_0)t}S^2.
 	\end{equation*}
 	
 	Dividing both sides by $(Se^{-(b-\mu_0)t})^2$ and integrating from $0$ to $t$ gives,
 	\begin{equation*}
 	\frac{e^{(b-\mu_0)t}}{S} \geq \frac{b}{K(b-\mu_0)} (e^{(b-\mu_0)t}-1) + \frac{1}{S(0)},
 	\end{equation*}
 	which leads to \eqref{colo}.
 	
 	Relations \eqref{boundr1} and \eqref{limit} are direct consequences of \eqref{colo}.
 \end{proof}

 \begin{proposition}
 	The total population $N(t)$ satisfies
 \begin{enumerate}
 			\item[(i)]
 		 \begin{equation}\label{colo1}
 	N(t) \leq \frac{1}{\frac{1}{S^{**}}(1-e^{-(b-\mu_0)t})+\frac{1}{N(0)}e^{-(b-\mu_0)t}} .
 	\end{equation}
 	In particular,
 	\begin{equation}\label{boundr2.1}
 	N(t)\leq \max\{ S^{**},N(0)\}\quad\text{and}\quad\limsup_{t\rightarrow \infty} N(t)\leq  S^{**}.
 	\end{equation}

 	
 	\item[(ii)]
 	\begin{equation}\label{colo3.1}
 	N(t) \geq \frac{1}{\frac{b}{K(b-\mu_3')}(1-e^{-(b-\mu_3')t})+\frac{1}{N(0)}e^{-(b-\mu_3')t}} .
 	\end{equation}
 	In particular,
 	\begin{equation}\label{boundr3.1}
 	N(t)\geq \min\{ \frac{K}{b}(b-\mu_3'),N(0)\}\quad\text{and}\quad\liminf_{t\rightarrow \infty} N(t)\geq  \frac{K}{b}(b-\mu_3').
 	\end{equation}
 	\end{enumerate}
 \end{proposition}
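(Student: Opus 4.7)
The natural first step is to derive an autonomous differential inequality for $N(t)$ by adding the five equations of \eqref{Model}. Because every transition term appears with opposite signs in a conserving pair of equations (the $\alpha_i SI_\cdot$ terms cancel between $S'$ and the infected equations; the $\eta_i I_i I_{12}$ and $\gamma_i I_1 I_2$ terms cancel among the infected equations; the $\rho_i I_i$ terms in $R'$ cancel the $\rho_i$ part of each $\mu_i$ via $\mu_i=\rho_i+\mu_i'$), all nonlinear bilinear interactions vanish and I expect to obtain the clean identity
\begin{equation*}
N' = b\!\left(1-\frac{N}{K}\right)N \;-\; \mu_0 S - \mu_1' I_1 - \mu_2' I_2 - \mu_3' I_{12} - \mu_4' R.
\end{equation*}
The first careful calculation in the write-up is this cancellation; it is routine but is the place where the proof can go wrong, so I would do it explicitly.

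For part (i), the assumption \eqref{assum2} gives $\mu_0\le\mu_4',\mu_j'$, hence the linear sink is bounded below by $\mu_0 N$. This yields the logistic-type differential inequality
\begin{equation*}
N' \;\le\; (b-\mu_0)N - \frac{b}{K}N^2,
\end{equation*}
which is precisely the inequality satisfied by $S$ in the previous proposition (with $S$ replaced by $N$ and initial datum $N(0)$). I would therefore repeat verbatim that argument: multiply by $e^{-(b-\mu_0)t}$, divide by $(Ne^{-(b-\mu_0)t})^2$, and integrate from $0$ to $t$ to obtain \eqref{colo1}. The consequences \eqref{boundr2.1} follow by comparing the convex combination in the denominator against its two extreme values and by letting $t\to\infty$.

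For part (ii), I use the reverse ordering: under the standing assumption that coinfection is the deadliest state, $\mu_3'$ is the maximum of $\{\mu_0,\mu_1',\mu_2',\mu_3',\mu_4'\}$, so the linear sink is bounded above by $\mu_3' N$. This gives the reversed logistic inequality
\begin{equation*}
N' \;\ge\; (b-\mu_3')N - \frac{b}{K}N^2,
\end{equation*}
and the same integrating-factor/reciprocal trick, now with $(b-\mu_3')$ in place of $(b-\mu_0)$ and the opposite inequality sign throughout, produces \eqref{colo3.1}. The asymptotic and max/min consequences in \eqref{boundr3.1} again follow by inspecting the denominator.

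The only real pitfall I foresee is the sign bookkeeping: because $b-\mu_3'>0$ is needed for the formula to make sense (so that the denominator in \eqref{colo3.1} does not blow up in finite time and the $\liminf$ is positive), I would check explicitly that the standing hypothesis $b>\mu_j$ for $j=1,2,3,4$ together with $\mu_3=\rho_3+\mu_3'$ indeed implies $b>\mu_3'$. Once that is noted, the argument for (ii) mirrors (i) step for step, and no new idea beyond the logistic comparison already used for $S$ is required.
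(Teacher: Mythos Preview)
Your proposal is correct and follows essentially the same route as the paper: sum the five equations to obtain $N'=b(1-N/K)N-\mu_0S-\mu_1'I_1-\mu_2'I_2-\mu_3'I_{12}-\mu_4'R$, bound the linear sink below by $\mu_0N$ (resp.\ above by $\mu_3'N$) using the ordering of the death rates, and then apply the integrating-factor/reciprocal trick from the previous proposition. Your write-up is in fact a bit more careful than the paper's, since you spell out the cancellation of the bilinear terms and flag the need for $b>\mu_3'$ (which indeed follows from $b>\mu_3=\rho_3+\mu_3'$).
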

 \begin{proof}
 	\begin{enumerate}
 			\item[(i)]
 	Summing up all the equations of \eqref{Model} gives
 	\begin{equation*}
 	N'-(b-\mu_0)N \leq-\frac{bN^2}{K},
 	\end{equation*}
 	which can be written as
 	\begin{equation*}
 	(Ne^{-(b-\mu_0)t})' \leq - \frac{b}{K} e^{-(b-\mu_0)t}N^2.
 	\end{equation*}
 	
 	Dividing both sides by $(Ne^{-(b-\mu_0)t})^2$ and integrating from $0$ to $t$ gives,
 	\begin{equation*}
 	\frac{e^{(b-\mu_0)t}}{N} \geq \frac{1}{S^{**}} (e^{(b-\mu_0)t}-1) + \frac{1}{N(0)},
 	\end{equation*}
 	which leads to \eqref{colo1}.
 	
 	Relations in \eqref{boundr2.1} are direct consequences of \eqref{colo1}.

 		\item[(ii)]
 	Summing up all the equations of \eqref{Model} gives
 	\begin{equation*}
 	N'-(b-\mu_3')N \geq-\frac{bN^2}{K},
 	\end{equation*}
 	which can be written as
 	\begin{equation*}
 	(Ne^{-(b-\mu_3')t})' \geq - \frac{b}{K} e^{-(b-\mu_3')t}N^2.
 	\end{equation*}
 	
 	Dividing both sides by $(Ne^{-(b-\mu_3')t})^2$ and integrating from $0$ to $t$ gives,
 	\begin{equation*}
 	\frac{e^{(b-\mu_3')t}}{N} \leq \frac{b}{K(b-\mu_3')} (e^{(b-\mu_3')t}-1) + \frac{1}{N(0)},
 	\end{equation*}
 	which leads to \eqref{colo3.1}.
 	
 	Relations in \eqref{boundr3.1} are direct consequences of \eqref{colo3.1}.
 \end{enumerate}
 \end{proof}
 \subsubsection{Boundaries for $N$}
 Here we discuss the boundaries for an equilibrium state.

 Summing up all the equations in \eqref{Model} at $S,I_1,I_2,I_{12},R$ non-zero gives
 \begin{equation}
 b(1-\frac{N}{K})N-\mu_0S-\mu_1'I_1-\mu_2'I_2-\mu_3'I_{12}-\mu_4'R=0.
 \end{equation}
 Above equation can be written as
 \begin{equation}\label{Neq}
 \frac{b}{K}(N^{**}-N)N-(\mu_1'-\mu_0)I_1-(\mu_2'-\mu_0)I_2-(\mu_3'-\mu_0)I_{12}-(\mu_4'-\mu_0)R=0,
 \end{equation}
 where $N^{**}=K(1-\frac{\mu_0}{b}).$

 It follows from \eqref{Neq} and \eqref{assum2}
 $$N\leq N^{**}.$$
 Moreover, it follows from the first equation of \eqref{Model}
 \begin{equation}\label{Seq}
 \frac{b}{K}(N^{**}-N)-\alpha_1I_1-\alpha_2I_2-\hat\alpha_3I_{12}=0.
 \end{equation}
 Dividing equations \eqref{Neq} by \eqref{Seq} gives
 \begin{equation}
 N=\frac{(\mu_1'-\mu_0)I_1+(\mu_2'-\mu_0)I_2+(\mu_3'-\mu_0)'I_{12}+(\mu_4'-\mu_0)R}{\alpha_1I_1+\alpha_2I_2+\hat\alpha_3I_{12}}.
 \end{equation}

 If $I_1+I_2+I_{12}\neq 0$ then

 \begin{equation}
 \begin{split}
 N&\geq \min_{I_1+I_2+I_{12}\neq 0} \frac{(\mu_1'-\mu_0)I_1+(\mu_2'-\mu_0)I_2+(\mu_3'-\mu_0)'I_{12}}{\alpha_1I_1+\alpha_2I_2+\hat\alpha_3I_{12}},\\
 &=\min\left(\frac{\mu_1'-\mu_0}{\alpha_1},\frac{\mu_2'-\mu_0}{\alpha_2},\frac{\mu_3'-\mu_0}{\alpha_3}\right)=\sigma_1.
 \end{split}
 \end{equation}
 Hence $\sigma_1\leq N\leq N^{**}.$
 \subsubsection{Boundaries of $N-R$}
 Summing up  the  first four equations in \eqref{Model} at an equilibrium state $S,I_1,I_2,I_{12}$ gives
 \begin{equation}
 b(1-\frac{N}{K})(N-R)-\mu_0S-\mu_1I_1-\mu_2I_2-\mu_3I_{12}=0.
 \end{equation}
 Above equation can be written as
 \begin{equation}\label{Neq1.1}
 \frac{b}{K}(S^{**}-N)(N-R)-(\mu_1-\mu_0)I_1-(\mu_2-\mu_0)I_2-(\mu_3-\mu_0)I_{12}=0.
 \end{equation}
 Moreover, it follows from the first equation of \eqref{Model}
 \begin{equation}\label{Seq1.2}
 \frac{b}{K}(S^{**}-N)-\alpha_1I_1-\alpha_2I_2-\hat\alpha_3I_{12}=0.
 \end{equation}
 Dividing equations \eqref{Neq1.1} by \eqref{Seq1.2} gives
 \begin{equation}
 N-R=\frac{(\mu_1-\mu_0)I_1+(\mu_2-\mu_0)I_2+(\mu_3-\mu_0)'I_{12}}{\alpha_1I_1+\alpha_2I_2+\hat\alpha_3I_{12}}.
 \end{equation}

 If $I_1+I_2+I_{12}\neq 0$ then $
 \sigma_1\leq N-R\leq \sigma_3.$

 \section{ Equilibrium points and their properties}\label{eqpoint}
In this section we discuss the all equilibrium points and some of their  basic properties.  We rewrite the system \eqref{Model} for equilibrium points as follows
 	\begin{equation}\label{EqptModel}
 \begin{split}
\biggl(b\left(1-\frac{N}{K}\right)-\alpha_1I_1-\alpha_2I_2-(\beta_1+\beta_2+\alpha_3)I_{12}-\mu_0\biggr)S&=0,\\
\left(b(1-\frac{N}{K})+\alpha_1S - \eta_1I_{12}-\gamma_1I_2 - \mu_1\right)I_1+\beta_1SI_{12}&=0,\\
 \left(b(1-\frac{N}{K})+\alpha_2S - \eta_2I_{12}-\gamma_2I_1- \mu_2\right)I_2+\beta_2SI_{12}&=0,\\
 \left(b(1-\frac{N}{K})+\alpha_3S+ \eta_1I_1+\eta_2I_2-\mu_3\right)I_{12}+(\gamma_1+\gamma_2)I_1I_2&=0,\\
\left(b(1-\frac{N}{K})-\mu_4'\right)R+\rho_1 I_1+\rho_2I_2+\rho_3 I_{12}&=0.
 \end{split}
 \end{equation}
 In the next subsection, we discuss those equilibrium points which do not exist or are always unstable.
   \subsection{Equilibrium points with  one zero $ I_i$-component and all zero $ I_i$-components }
 There are two equilibrium point where one of the  $I$ component is zero and coinfected class is present and one equilibrium point where both single infections are non zero but coinfected class is zero. There is one equilibrium point where both single infections are absent but the component $I_{12}$ is non zero. These equilibrium points do not exist because it directly follows from the second and third equation of \eqref{EqptModel} if $I_1=I_2=0$ then $I_{12}$ must be zero. It means that the coinfection can not exist  in the absence of both single infectious agents.
 Next we consider the equilibrium point with $I_1=I_2=I_{12}=0$ but $S\neq 0$ and $R\neq0.$ i.e $$
 g_0=(S^*,0,0,0,R^*).
 $$
 Then it follows from the first and last equation of \eqref{EqptModel} that
 $$b\left(1-\frac{N}{K}\right)=\mu_0\quad\text{and}\quad b\left(1-\frac{N}{K}\right)=\mu_4',$$
 which have no common solution since  $\mu_0\neq \mu_4'.$
  \subsection{Equilibrium points with zero S or R-component }
  Let $R=0$ then  it follows from the last equation of \eqref{EqptModel} that $I_1=I_2=I_{12}=0$ and there are only two such equilibrium points in this case $G_1=(0,0,0,0,0)$ and $G_2=(S^{**},0,0,0,0). $ The equilibrium point $G_1$ is always unstable.
  Next we consider the case when $R\neq 0$ and $S=0. $

The first equilibrium point in this category is $g_1=(0,0,0,0,R^{**})$ and this point is unstable.
 The next three  equilibrium points in this case are
 \begin{equation}\label{eqpt8}
 g_2=\left(0,I_1^*,0,0,R^*\right),
 \end{equation}
 where $$ I_1^*=\frac{K}{b}(b-\mu_1)-R^*,\quad R^*=\frac{\rho_1(b-\mu_1)}{(\mu_4'-\mu_1')}.$$

 \begin{equation}\label{eqpt9}
 g_3  =\left(0,0,I_2^*,0,R^*\right),
 \end{equation}
 where $$ I_2^*=\frac{K}{b}(b-\mu_2)-R^*,\quad R^*=\frac{\rho_2(b-\mu_2)}{(\mu_4'-\mu_2')}$$
 and
 \begin{equation}\label{eqpt10}
 g_4 =\left(0,0,0,I_{12}^*,R^*\right),
 \end{equation}
 where $$ I_{12}^*=\frac{K}{b}(b-\mu_3)-R^*\quad R^*=\frac{\rho_3(b-\mu_3)}{(\mu_4'-\mu_3')}.$$
 By \eqref{assum2}, the $R$-component in the equilibrium points $g_2, g_3, g_4$ is negative .Therefore these equilibrium points do not exist.
 Let us now consider the case when one of the infected class is zero with $S=0.$
 The next three equilibrium point in this categories are  as follows

 \begin{equation}
 g_4 =\left(0,I_1^*,0,I_{12}^*,R^*\right),
 \end{equation}
 where $I_1^*,I_{12}^*,R^*$ can be found by solving the following equations
 \begin{equation}\label{eqpt11}
 \begin{split}
 b\left(1-\frac{N}{K}\right)-\eta_1I_{12}-\mu_1&=0,\\
 b\left(1-\frac{N}{K}\right)+\eta_1I_1-\mu_3&=0,\\
 \left(b\left(1-\frac{N}{K}\right)-\mu_4'\right)R+\rho_1I_1+\rho_3I_{12}&=0.
 \end{split}
 \end{equation}
 It follows from the \eqref{eqpt11}, that
 \begin{equation}
 \begin{split}
 b\left(1-\frac{N}{K}\right)-\mu_1&>0,\\
 b\left(1-\frac{N}{K}\right)-\mu_4'<0.
 \end{split}
 \end{equation}
 This implies that
 $\mu_1<b\left(1-\frac{N}{K}\right)<\mu_4',$
 which contradicts \eqref{assum2}. Therefore this equilibrium point does not exist.
 \begin{equation}
 g_5=\left(0,0,I_2^*,I_{12}^*,R^*\right).
 \end{equation}
 where $I_2^*,I_{12}^*,R^*$ can be found by solving the following equations
 \begin{equation}\label{eqpt12}
 \begin{split}
 b\left(1-\frac{N}{K}\right)-\eta_2I_{12}-\mu_2&=0,\\
 b\left(1-\frac{N}{K}\right)+\eta_2I_2-\mu_3&=0,\\
 \left(b\left(1-\frac{N}{K}\right)-\mu_4'\right)R+\rho_2I_2+\rho_3I_{12}&=0.
 \end{split}
 \end{equation}
 It follows from the \eqref{eqpt12}, that
 \begin{equation}
 b\left(1-\frac{N}{K}\right)-\mu_2>0\quad \text{and}\quad b\left(1-\frac{N}{K}\right)-\mu_4'<0.
 \end{equation}
 Therefore
 $\mu_2<b\left(1-\frac{N}{K}\right)<\mu_4',$
 which contradicts \eqref{assum2}. Hence this equilibrium point does not exist.
 Similarly  for equilibrium point
 $
  g_6=(0,I_1^*,I_2^*,I_{12}^*,R^*),
 $ we have
 $$\mu_1<\mu_2<b\left(1-\frac{N}{K}\right)<\mu_4, $$ which contradicts \eqref{assum2}. Therefore  $g_6$ does not exist.

 \subsection{Stable Equilibrium points}
 The system \eqref{Model} has thirty two equilibrium points and the equilibrium points with zero $S,R$-components do not exist. Moreover those equilibrium points in which coinfection is present with one single infection and without any single infection do not exist. This guarantees that the coinfection  is mediated by two single infectious agents.  Our aim is to describe here stable equilibrium points.  Note that all equilibrium points with zero $S$ or $R$- components do not exist. 
Also the equilibrium points with one non-zero $I$ -component do not exist. Therefore there are seven non-zero equilibrium points in this case.
 There is one stable non-trivial  disease-free  (or a healthy equilibrium) equilibrium point  which means disease is absent. In the present notation the disease-free equilibrium corresponds to $I_1=I_2=I_{12}=R=0$, and it follows from \eqref{EqptModel}  that

 $$
 G_2=(S^{**},0,0,0,0).
 $$



 The next equilibrium points must have non zero $S$ and $R$ components.
 The first equilibrium point with the presence of the first strain is
 \begin{equation*}
 G_3 =\left(S^*,I_1^*, 0,0,R^*\right).
 \end{equation*}
 where $S^*,I_1^*,R^*$ can be found by solving the following equations
  \begin{equation}\label{eqpt4}
 \begin{split}
 b\left(1-\frac{N}{K}\right)-\alpha_1I_1-\mu_0=0,\\
 b\left(1-\frac{N}{K}\right)+\alpha_1S-\mu_1=0,\\
 b\left(1-\frac{N}{K}\right)-\mu_4'+\rho_1\frac{I_1}{R}=0
 \end{split}
 \end{equation}
and  equilibrium point with the presence of the second strain is
 \begin{equation*}
 G_4 =\left(S^*,0,I_2^*,0,R^*\right),
 \end{equation*}
 where $S^*,I_2^*,R^*$ can be found by solving the following equations
  \begin{equation}\label{eqpt5}
 \begin{split}
 b\left(1-\frac{N}{K}\right)-\alpha_2I_2-\mu_0=0,\\
 b\left(1-\frac{N}{K}\right)+\alpha_2S-\mu_2=0,\\
 b\left(1-\frac{N}{K}\right)-\mu_4'+\rho_2\frac{I_2}{R}=0.
 \end{split}
 \end{equation}
 Note that  equilibrium points $G_3, G_4$ are independent of parameters $\beta_i, \gamma_i,~i=1,2.$

The  remaining equilibrium point is coexistence equilibrium point i.e
$$
G_5=(S^*,I_1^*,I_2^*,I_{12}^*,R^*)
$$
with all non zero components.

\bigskip
\section{Equilibrium point: Existence and Local Stability}
In this section we discuss the existence  and local stability of equilibrium points $G_2,G_3, G_4$ and $G_5.$
\subsection{Equilibrium point $G_2$}
The equilibrium point $G_2$ always exists.
The Jacobian matrix evaluated at $G_2=(S^{**},0,0,0,0)$ is
\[
J=
\begin{bmatrix}
-(b-\mu_0) & -(\alpha_1+\frac{b}{K})S^{**} & -(\alpha_2+\frac{b}{K})S^{**} & -(\hat{\alpha_3}+\frac{b}{K})S^{**}& -(b-\mu_0)  \\
0 & \alpha_1(S^{**}-\sigma_1) & 0 & \beta_1S^{**} & 0 \\
0 & 0 & \alpha_2(S^{**}-\sigma_2) & \beta_2S^{**} & 0 \\
0 & 0 & 0 & \alpha_3(S^{**}-\sigma_3) & 0\\
0 & \rho_1 & \rho_2 & \rho_3 & \mu_0-\mu_4'
\end{bmatrix}.
\]

The second, third and fourth diagonal  elements are negative if
$S^{**}<\sigma_1$
and by \eqref{assum2} the remaining diagonal elements are also negative.
Therefore  the equilibrium point $G_2$ is stable if $S^{**}<\sigma_1$ or equivalently  $K<K^*$
where
\begin{equation}\label{Kcrit}
K^*=\frac{b\sigma_1}{b-\mu_0}.
\end{equation}

\subsection{Equilibrium point $G_3$}
\subsubsection{Existence}
It follows from the first and second equation of \eqref{eqpt4} that
\begin{equation}\label{sum3.1}
S+I_1=\frac{\mu_1-\mu_0}{\alpha_1}=\sigma_1.
\end{equation}
Also from the  first  equation of  \eqref{eqpt4} we get
\begin{equation}\label{G3.11}
I_1=\frac{b}{K\alpha_1}(S^{**}-\sigma_1-R)=\frac{b}{K\alpha_1}(\hat S_1-R),\quad \hat S_1=S^{**}-\sigma_1.
\end{equation}
The components $R$ and $I_1$ are positive if $\hat S_1>0$ and $0<R<\hat S_1.$
We obtain equation for $R$ by substituting $I_1$ and $S=\sigma_1-I_1$ in the last equation of \eqref{eqpt4}  and dividing by $\frac{b}{K}$  gives
\begin{equation}\label{G3.12}
f(R,K)=0,
\end{equation}
where
$$f(R,K)=\left((\hat S_1-R)+\frac{K}{b}(\mu_0-\mu_4')\right)R+\frac{\rho_1}{\alpha_1}(\hat S_1-R).$$
In what follows, it is convenient to consider the component of the equilibrium point as a function of $K.$
Next differentiating the \eqref{G3.12} w.r.t $R$ we obtain that
\begin{equation}\label{G3.12.1}
\frac{\partial f}{\partial R}=\left((\hat S_1-R)+\frac{K}{b}(\mu_0-\mu_4')\right)-R-\frac{\rho_1}{\alpha_1}
\end{equation}
and it follows from \eqref{G3.12}  that $\left((\hat S_1-R)+\frac{K}{b}(\mu_0-\mu_4')\right)<0.$ Therefore $\frac{\partial f(R,K)}{\partial R}<0.$

Moreover
\begin{equation}\label{G3.13}
\frac{\partial f(R,K)}{\partial K}=\left(\frac
{b-\mu_0} {b}+\frac
{\mu_0-\mu_4'} {b}\right)R+\frac{\rho_1}{\alpha_1}(\frac
{b-\mu_0} {b}).
\end{equation}
Since
$$\frac{df(R,K)}{dK}=\frac{\partial f}{\partial R}\frac{dR}{dK}+\frac{\partial f}{\partial K}=0,$$
then
\begin{equation}\label{Rineq1}
\frac{dR}{dK}=-\frac{\frac{\partial f}{\partial K}}{\frac{\partial f}{\partial R}}>0.
\end{equation}
Since $f(0)=\frac{b\rho_1}{K\alpha_1}\hat S_1>0$ and $f(\hat S_1)=\frac{b}{K}\hat S_1(\mu_0-\mu_4')<0,$ equation \eqref{G3.12} has a unique solution in $(0,\hat S_1)$ which depends on $K$.This implies that $R(K^*)=0.$
Moreover, for $S$ component, it follows from the first and third equation of  \eqref{eqpt4} that
\begin{equation}\label{G3.14}
I_1=\frac{(\mu_4'-\mu_0)R}{\alpha_1R+\rho_1}.
\end{equation}
Using \eqref{G3.14} in \eqref{sum3.1}, we get
\begin{equation}\label{G3.15}
 S=\frac{\alpha_1(\mu_1-\mu_4')+\frac{\rho_1}{R}(\mu_1-\mu_0)}{\alpha_1(\alpha_1+\frac{\rho_1}{R})}>0.
 \end{equation}
 Thus the equilibrium point $G_3$ exist when $\hat S_1>0$ or  equivalently $S^{**}>\sigma_1.$
 We finish this section by the following property of $R$ which will be helpful in the study of stability of equilibrium point $G_3.$
 \begin{proposition}\label{prop4}
 	Let the equilibrium point $G_3$ exists. Then
	the following inequality holds for $R$,
 $$0<K\frac{dR}{dK}<\sigma_1+R.$$

\end{proposition}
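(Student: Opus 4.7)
The lower bound is immediate: since $\tfrac{dR}{dK} = -\tfrac{\partial f/\partial K}{\partial f/\partial R}$ and the paper has already established that $\tfrac{\partial f}{\partial R}<0$ at the equilibrium while $\tfrac{\partial f}{\partial K}>0$ (both summands in \eqref{G3.13} are positive because $b>\mu_4'$ and $b>\mu_0$), one gets $\tfrac{dR}{dK}>0$, hence $K\tfrac{dR}{dK}>0$.

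For the upper bound, the plan is to compute both $K\,\partial f/\partial K$ and $-\partial f/\partial R$ in closed form, using the defining equation $f(R,K)=0$ to eliminate the combination $(\hat S_1-R)+\tfrac{K}{b}(\mu_0-\mu_4')$ whenever it appears. Write $A=\hat S_1-R$ and $B=\tfrac{K}{b}(\mu_4'-\mu_0)>0$, so that $f=0$ becomes $R(A-B)+\tfrac{\rho_1}{\alpha_1}A=0$, equivalently $A-B=-\tfrac{\rho_1 A}{\alpha_1 R}$. Substituting this into \eqref{G3.12.1} and using $A+R=\hat S_1$ yields the tidy identity
\begin{equation*}
-\frac{\partial f}{\partial R}=R+\frac{\rho_1\hat S_1}{\alpha_1 R}.
\end{equation*}

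The main computation is $K\,\partial f/\partial K$. Starting from \eqref{G3.13}, rewrite $\tfrac{K(b-\mu_4')}{b}=S^{**}-B=\hat S_1+\sigma_1-B$ and $\tfrac{K\rho_1(b-\mu_0)}{\alpha_1 b}=\tfrac{S^{**}\rho_1}{\alpha_1}=\tfrac{\rho_1\hat S_1}{\alpha_1}+\tfrac{\rho_1\sigma_1}{\alpha_1}$. Then expand $\hat S_1+\sigma_1-B=A+R+\sigma_1-B$, use $(A-B)R=-\tfrac{\rho_1 A}{\alpha_1}$ from $f=0$, and collect:
\begin{equation*}
K\frac{\partial f}{\partial K}=R^2+\sigma_1 R+\frac{\rho_1(\hat S_1-A)}{\alpha_1}+\frac{\rho_1\sigma_1}{\alpha_1}=(R+\sigma_1)\left(R+\frac{\rho_1}{\alpha_1}\right),
\end{equation*}
using $\hat S_1-A=R$. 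This factorisation is the key step and what makes the estimate clean.

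Combining,
\begin{equation*}
K\frac{dR}{dK}=\frac{(R+\sigma_1)(R+\rho_1/\alpha_1)}{R+\rho_1\hat S_1/(\alpha_1 R)}=\frac{(R+\sigma_1)\,R(\alpha_1 R+\rho_1)}{R(\alpha_1 R+\rho_1)+\rho_1 A},
\end{equation*}
after multiplying numerator and denominator by $\alpha_1 R$ and using $\hat S_1=A+R$ in the denominator. The desired bound $K\tfrac{dR}{dK}<\sigma_1+R$ is then equivalent to $R(\alpha_1 R+\rho_1)<R(\alpha_1 R+\rho_1)+\rho_1 A$, i.e.\ to $\rho_1 A=\rho_1(\hat S_1-R)>0$, which is exactly the positivity condition $0<R<\hat S_1$ guaranteed by the existence of $G_3$.

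The only real obstacle is recognising the factorisation $K\,\partial f/\partial K=(R+\sigma_1)(R+\rho_1/\alpha_1)$; everything else is bookkeeping. Once that identity is in hand, the proposition reduces to the trivial inequality $\rho_1(\hat S_1-R)>0$.
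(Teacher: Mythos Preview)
Your proof is correct and follows essentially the same route as the paper: both derive the key factorisation $K\,\partial f/\partial K=(R+\sigma_1)(R+\rho_1/\alpha_1)$ by rewriting $K\,\partial f/\partial K$ as $f+(\sigma_1+R)(R+\rho_1/\alpha_1)$ and invoking $f=0$. The only minor difference is in the treatment of $-\partial f/\partial R$: the paper simply drops the negative term $(\hat S_1-R)+\tfrac{K}{b}(\mu_0-\mu_4')$ in \eqref{G3.12.1} to obtain the crude bound $-\partial f/\partial R>R+\rho_1/\alpha_1$, which immediately cancels one factor, whereas you substitute $f=0$ into \eqref{G3.12.1} to get the exact value $-\partial f/\partial R=R+\rho_1\hat S_1/(\alpha_1 R)$ and then argue via $\rho_1(\hat S_1-R)>0$; both arrive at the same inequality with comparable effort.
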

\begin{proof}
	The proof for the  first part of inequality  is given by \eqref{Rineq1}.
	For the second inequality, it follows from \eqref{G3.13} that
	\begin{equation}
	\begin{split}
	K\frac{\partial f}{\partial K}&=\left((\hat S_1-R)+\frac{K}{b}(\mu_0-\mu_4')+\sigma_1+R\right)R+\frac{\rho_1}{\alpha_1}(\hat S_1-R+\sigma_1+R)=0,\\
	K\frac{\partial f}{\partial K}&=f+(\sigma_1+R)(R+\frac{\rho_1}{\alpha_1}).
	\end{split}
	\end{equation}
	Since $f(R,K)=0$ then we have
	\begin{equation}
	\begin{split}
	K\frac{\partial f}{\partial K}&=(\sigma_1+R)(R+\frac{\rho_1}{\alpha_1}).
	\end{split}
	\end{equation}
	Next, it directly follows from \eqref{G3.12.1} that
	\begin{equation}\label{G3.12.2}
	\frac{\partial f}{\partial R}<-R-\frac{\rho_1}{\alpha_1}(\hat S_1-R).
	\end{equation}
	This implies that
	$$-	\frac{\partial f}{\partial R}>R+\frac{\rho_1}{\alpha_1}(\hat S_1-R). $$
	Therefore from \eqref{Rineq1}, we have
	$$K\frac{\partial R}{\partial K}<\sigma_1+R.$$
	
\end{proof}
\subsubsection{Effect of large $K$ on $R$}\label{KG3}

It follows from the first equation of \eqref{eqpt4}
\begin{equation}\label{Ivalue4.1}
I_1=\frac{b}{K\alpha_1}(S^{**}-\sigma_1-R)=\frac{b}{K\alpha_1}(\hat S_1-R).
\end{equation}
Solving the first and second equation of \eqref{eqpt4} together gives
\begin{equation}\label{sum4.1}
S+I_1=\frac{\mu_1-\mu_0}{\alpha_1}.
\end{equation}
Substituting \eqref{Ivalue4.1} and \eqref{sum4.1} in the last equation of \eqref{eqpt4} gives
\begin{equation}
-\frac{b}{K}R^2+\frac{b}{K}\left(R^{**}-\frac{\mu_1-\mu_0}{\alpha_1}-\frac{\rho_1}{\alpha_1}\right)R+\frac{b\rho_1}{K\alpha_1}\left(S^{**}-\frac{\mu_1-\mu_0}{\alpha_1}\right)=0,
\end{equation}
where $$R^{**}=\frac{K}{b}(b-\mu_4').$$

Some simple arguments shows that for large $K$
\begin{equation}\label{largeR}
\begin{split}
R= R^{**}+O(1),\\
I_1=\frac{\mu_4'-\mu_0}{\alpha_1}+O(\frac{1}{K}),\\
S=\frac{\mu_1-\mu_4'}{\alpha_1}+O(\frac{1}{K}).
\end{split}
\end{equation}
So if we increase the carrying capacity then the recovered population grows.
\subsubsection{Dependence of $I_1$ on K}
Let us consider $I_1$ as a function of $K$, then it follows from \eqref{G3.11} that
\begin{equation}
\frac{dI_1}{dK}=\frac{b}{K^2\alpha_1}(R+\sigma_1-KR').
\end{equation}
Note that $\frac{dI_1}{dK}>0$ due to proposition \eqref{prop4}. Also if $S^{**}=\sigma_1,$ then $R=I_1=0.$
Next, using first equation in \eqref{largeR}, we get
\begin{equation}
I_1(K)\rightarrow \frac{b}{K\alpha_1}(S^{**}-R^{**})=\frac{\mu_4'-\mu_0}{\alpha_1}\quad\text{as}\quad K\rightarrow \infty.
\end{equation}
Since the function $I_1$ is monotone, we have
$$I_1(K)<\frac{\mu_4'-\mu_0}{\alpha_1}.$$
\subsubsection{Dependence of $S$ on K}
 Consider \eqref{G3.15} and rewrite it as follows
\begin{equation}\label{G3.16}
 S=\frac{\alpha_1R(\mu_1-\mu_4')+\rho_1(\mu_1-\mu_0)}{\alpha_1(\alpha_1R+\rho_1)},
\end{equation}
and
\begin{equation}
\frac{dS}{dK}=\frac{\rho_1R'(\mu_0-\mu_4')}{(\alpha_1R+\rho_1)^2}.
\end{equation}
It is evident from proposition \ref{prop4}  and \eqref{assum2} that $\frac{dS}{dK}<0$.
Since $R(K^*)=0,$ then from \eqref{G3.16} it follows that
\begin{equation}\label{G3.17}
S(K^*)=\frac{\mu_1-\mu_0}{\alpha_1}.
\end{equation}
Next, using first equation in \eqref{largeR}, we get
\begin{equation}\label{G3.18}
S\rightarrow\frac{\alpha_1(\mu_1-\mu_4')+\frac{\rho_1}{R^{**}}(\mu_1-\mu_0)}{\alpha_1(\alpha_1+\frac{\rho_1}{R^{**}})}=\frac{\mu_1-\mu_4'}{\alpha_1}\quad\text{as}\quad K\rightarrow \infty.
\end{equation}
Relation \eqref{G3.18} and \eqref{G3.17} gives
$$\frac{\mu_1-\mu_4'}{\alpha_1}<S(K)<\frac{\mu_1-\mu_0}{\alpha_1}.$$
\subsubsection{Local stability }
The Jacobian matrix evaluated at $G_3=(S^{*},I_1^*,0,0,R^*)$ is
\begin{gather*}
\setlength{\arraycolsep}{.9\arraycolsep}
\text{\footnotesize$\displaystyle
	J=
	\begin{bmatrix}
	-\frac{b}{K}S^* & -(\alpha_1+\frac{b}{K})S^* & -(\alpha_2+\frac{b}{K})S^* & -(\hat{\alpha_3}+\frac{b}{K})S^* & -\frac{b}{K}S^*  \\
	(\alpha_1-\frac{b}{K})I_1^* & -\frac{b}{K}I_1^*& -(\gamma_1+\frac{b}{K})I_1^* & \beta_1S^*-(\frac{b}{K}+\eta_1)I_1^* & -\frac{b}{K}I_1^* \\
	0 & 0 & b\left(1-\frac{N}{K}\right)+\alpha_2S^*-\gamma_2I_1^*-\mu_2 & \beta_2S^* & 0 \\
	0 & 0 & (\gamma_1+\gamma_2)I_1^* & b\left(1-\frac{N}{K}\right)+\alpha_3S^*+\eta_1I_1^*-\mu_3 & 0\\
	-\frac{b}{K}R^* & \rho_1 -\frac{b}{K}R^* & \rho_2-\frac{b}{K}R^* &\rho_3-\frac{b}{K}R^* & -\rho_1\frac{I_1^*}{R^*}-\frac{b}{K}R^*
	\end{bmatrix}.
	$}
\end{gather*}
Due to the block structure of the matrix the stability of the above matrix is equivalent to the stability of the following  two  matrices
\begin{equation}\label{J1}
J_1=
\begin{bmatrix}
-\frac{b}{K}S^* & -(\alpha_1+\frac{b}{K})S^* &  -\frac{b}{K}S^* \\
(\alpha_1-\frac{b}{K})I_1^* & -\frac{b}{K}I_1^* & -\frac{b}{K}I_1^* \\
-\frac{b}{K}R^* &  \rho_1 -\frac{b}{K}R^* & -\rho_1\frac{I_1^*}{R^*}-\frac{b}{K}R^*
\end{bmatrix},
\end{equation}
and
\begin{equation}\label{J2}
J_2=
\begin{bmatrix}
b(1-\frac{N}{K})+\alpha_2S^*-\gamma_2I_1^*-\mu_2 &  \beta_2S^* \\
(\gamma_1+\gamma_2)I_1^* &  b(1-\frac{N}{K})+\alpha_3S^*+\eta_1I_1^*-\mu_3
\end{bmatrix}.
\end{equation}
The characteristic polynomial of matrix $J_1$ is given by
$$\lambda^3+a_2\lambda^2+a_1\lambda+a_0=0$$
where
\begin{equation}
\begin{split}
a_0&=\rho_1\alpha_1^2\frac{SI_1^2}{R}+\frac{b\alpha_1^2}{K}SI_1R+\frac{b\alpha_1\rho_1}{K}SI_1,\\
a_1&=\alpha_1^2SI_1+\frac{\rho_1b}{K}\frac{I_1^2}{R}+\frac{\rho_1b}{K}I_1+\frac{\rho_1b}{K}\frac{SI_1}{R},\\
a_2&=\frac{b}{K}N+\frac{\rho_1I_1}{R}.
\end{split}
\end{equation}
By the Routh-Hurwitz criterion,  the stability condition is given by
$$a_2a_1-a_0=\frac{b\alpha_1}{K}(\mu_1'-\mu_0)SI_1+(\frac{b}{K}N+\frac{\rho_1I_1}{R})\left(\frac{\rho_1b}{K}\frac{I_1^2}{R}+\frac{\rho_1b}{K}I_1+\frac{\rho_1b}{K}\frac{SI_1}{R}\right)>0. $$
Therefore  the matrix $J_1$ is stable and the stability of matrix $J$ is equivalent to the stability of matrix $J_2.$
So
\begin{equation}
\begin{split}
\det J_2=\left(\frac{b}{K\alpha_1}(\alpha_1-\alpha_2-\gamma_2)(\hat S-R)+\alpha_2(\sigma_1-\sigma_2)\right)\left(\frac{b}{K\alpha_1}(\alpha_1-\alpha_3+\eta_1)(\hat S-R)+\alpha_3(\sigma_1-\sigma_3)\right)-\beta_2(\gamma_1+\gamma_2)S^*I_1^*.
\end{split}
\end{equation}
We rewrite the  $\det J_2$ as follows
\begin{equation}\label{det1.1}
\Delta_1(K)=\det J_2=P_1Q_1-U_1,
\end{equation}
where
\begin{equation} \label{det1.2}
\begin{split}
P_1(K)&=\left(\frac{b}{K\alpha_1}(\alpha_1-\alpha_2-\gamma_2)(\hat S-R)+\alpha_2(\sigma_1-\sigma_2)\right),\\
Q_1(K)&=\left(\frac{b}{K\alpha_1}(\alpha_1-\alpha_3+\eta_1)(\hat S-R)+\alpha_3(\sigma_1-\sigma_3)\right),\\
U_1(K)&=\beta_2(\gamma_1+\gamma_2)S^*I_1^*.
\end{split}
\end{equation}
The sign of the determinant depends on the sign of $P_1 $ and $Q_1. $
First, we consider
$\Delta_1(K^*)=\alpha_2\alpha_3(\sigma_2-\sigma_1)(\sigma_3-\sigma_1)>0$ where $K^*$ is defined by \eqref{Kcrit}. Also
\begin{equation}
\begin{split}
P_1(K^*)=\alpha_2(\sigma_1-\sigma_2)<0\quad \text{and}\quad Q_1(K^*)=\alpha_3(\sigma_1-\sigma_3)<0,
\end{split}
\end{equation}
and $\det(J_2)>0.$ Hence the matrix $J_2$ is stable at $K=K^*.$ Let us introduce $K_0$ as the largest number, $K_0>K^*,$ such that $P_1(K)<0$ and $Q_1(K)<0$, $K^*\leq K<K_0.$ Let $K_1>K^*$ be the first root of $\det J_2(K)=0. $ So $\det J_2(K)>0$ for $K\in [K^*,K_1). $ Clearly, $K_1<K_0. $ Therefore the matrix $J_2$ is stable on $[K^*,K_1)$ and lose its stability at $K_1.$

Next, for further study of stability of matrix $J_2$ on the root $K_1$, we assume that
\begin{equation}\label{trans1}
\delta_1=(\alpha_1-\alpha_2-\gamma_2)>0\quad\text{and}\quad \delta_2=(\alpha_1-\alpha_3+\eta_1)>0
\end{equation}
 and
\begin{equation}\label{stab1.1}
\frac{\mu_4'-\mu_0}{\alpha_1}<\frac{1}{2}\sigma_1,
\end{equation}
then by proposition \eqref{prop4}, we have
\begin{equation}
\frac{dP_1}{dK}=\frac{b\delta_1}{K^2\alpha_1}(R+\sigma_1-KR')>0
\end{equation}
and
\begin{equation}
\frac{dQ_1}{dK}=\frac{b\delta_2}{K^2\alpha_1}(R+\sigma_1-KR')>0.
\end{equation}
Since $P_1$ and $Q_1$ are negative on $[K^*,K_0]$, then
\begin{equation}
(P_1Q_1)'=P_1'Q_1+P_1Q_1'<0.
\end{equation}
Furthermore for $U_1$ we consider
\begin{equation}
\frac{dSI_1}{dK}=(\sigma_1-2I_1)\frac{dI_1}{dK}=\frac{b}{K^2\alpha_1}(\sigma_1-2I_1)(R+\sigma_1-KR').
\end{equation}
This implies that
$\mu_4'-\mu_0<\frac{\mu_1-\mu_0}{2}. $
Therefore  by \eqref{stab1.1}, $\frac{dSI_1}{dK}>0.$
Hence the function $P_1Q_1$ is decreasing with respect to $K$ on  $[K^*,K_0]$ where $P_1,Q_1<0$ together with $SI_1. $ This shows that $\Delta_1(K)$ is decreasing function and $\Delta_1(K^*)>0.$

Furthermore, for large $K$  it follows from the first equation of \eqref{largeR} that
\begin{equation}
P_1= \frac{\delta_1}{\alpha_1}(\mu_4'-\mu_0)+\alpha_2(\sigma_1-\sigma_2)+O(\frac{1}{K}),
\end{equation}
\begin{equation}
Q_1= \frac{\delta_2}{\alpha_1}(\mu_4'-\mu_0)+\alpha_3(\sigma_1-\sigma_3)+O(\frac{1}{K})
\end{equation}
and
\begin{equation}
U_1= \frac{\beta_2(\gamma_1+\gamma_2)}{\alpha_1^2}(\mu_4'-\mu_0)(\mu_1-\mu_4')+O(\frac{1}{K}).
\end{equation}
In this case
$$\Delta_1(\infty)= \biggl(\frac{\delta_1}{\alpha_1}(\mu_4'-\mu_0)+\alpha_2(\sigma_1-\sigma_2)\biggr)\biggl(\frac{\delta_2}{\alpha_1}(\mu_4'-\mu_0)+\alpha_3(\sigma_1-\sigma_3)\biggr)- \frac{\beta_2(\gamma_1+\gamma_2)}{\alpha_1^2}(\mu_4'-\mu_0)(\mu_1-\mu_4').$$
 If $\Delta_1(\infty)>0,$ then
 \begin{equation}\label{transcrit}
 \biggl(\frac{\delta_1}{\alpha_1}(\mu_4'-\mu_0)+\alpha_2(\sigma_1-\sigma_2)\biggr)\biggl(\frac{\delta_2}{\alpha_1}(\mu_4'-\mu_0)+\alpha_3(\sigma_1-\sigma_3)\biggr)> \frac{\beta_2(\gamma_1+\gamma_2)}{\alpha_1^2}(\mu_4'-\mu_0)(\mu_1-\mu_4')
 \end{equation}
   and the matrix $J_2$ is stable for all $K>K^*$.
But if $\Delta_1(\infty)<0$ then matrix $J_2$ is unstable for all $K>K^*$ .

Hence equilibrium point $G_3$ is locally stable for all $K\in [K^*, K_1)$ and loses its stability at $K_1$ and become unstable for all  $K\geq K_1$.
Furthermore, the matrix $J_2$ has two simple eigenvalue for small $\Delta_1(K)$, one of them is separated from zero and negative and the other is small and given by

$$\lambda(K)=\frac{\Delta_1}{P_1(K)+Q_1(K)}+O(\Delta_1^2).$$
Moreover if $\Delta_1(K)=0$ then we have another equilibrium point $G_5.$
Biologically the stability of $G_3$ indicates that strain one can invade in the population alone  for a certain value of carrying capacity, but as K increases,  the invasion and persistence of more infections is possible.

\subsection{Equilibrium point $G_4$}
\subsubsection{Existence}
  It follows from the first and second equation of \eqref{eqpt5} that
\begin{equation}\label{sum4.2}
S+I_2=\frac{\mu_2-\mu_0}{\alpha_2}=\sigma_2.
\end{equation}
Also from the  first  equation of  \eqref{eqpt5} we get
\begin{equation}\label{G4.11}
I_2=\frac{b}{K\alpha_2}(S^{**}-\sigma_2-R)=\frac{b}{K\alpha_2}(\hat S_2-R), \quad \hat S_2=S^{**}-\sigma_2
\end{equation}
The components $R$ and $I_2$ are positive if $\hat S_2>0$ and $0<R<\hat S_2.$
We obtain equation for $R$ by substituting $I_2$ and $S=\sigma_2-I_2$ in the last equation of \eqref{eqpt5} and dividing by $\frac{b}{K}$ gives
\begin{equation}\label{G4.12}
f(R,K)=0,
\end{equation}
where
$$f(R,K)=\left((\hat S_2-R)+\frac{K}{b}(\mu_0-\mu_4')\right)R+\frac{\rho_2}{\alpha_2}(\hat S_2-R).$$
Differentiating  equation \eqref{G4.12}  w.r.t $R$ gives
\begin{equation}\label{4.12.1}
\frac{\partial f}{\partial R}=\left((\hat S_2-R)+\frac{K}{b}(\mu_0-\mu_4')\right)-R-\frac{\rho_2}{\alpha_2}
\end{equation}
and it follows from \eqref{G4.12}  that $\left((\hat S_2-R)+\frac{K}{b}(\mu_0-\mu_4')\right)<0.$ Therefore $\frac{\partial f}{\partial R}<0.$

Moreover
\begin{equation}\label{G4.13}
\frac{\partial f(R,K))}{\partial K}=\left(\frac
{b-\mu_0} {b}+\frac
{\mu_0-\mu_4'} {b}\right)R+\frac{\rho_2}{\alpha_2}(\frac
{b-\mu_0} {b})
\end{equation}
Since
$$\frac{d f(R(K),K)}{dK}=\frac{\partial f}{\partial R}\frac{dR}{dK}+\frac{\partial f}{\partial K}=0,$$
we have
\begin{equation}\label{Rineq2}
\frac{dR}{dK}=-\frac{\frac{\partial f}{\partial K}}{\frac{\partial f}{\partial R}}>0.
\end{equation}
Since $f(0)=\frac{b\rho_2}{K\alpha_2}\hat S_2>0$ and $f(\hat S_2)=\frac{b}{K}\hat S_2(\mu_0-\mu_4')<0,$ therefore  equation \eqref{G4.12} has a unique solution in $(0,\hat S_2)$ which depends on $K$.
Moreover, for $S$ component, it follows from the first and third equation of  \eqref{eqpt5} that
\begin{equation}\label{G4.14}.
I_2=\frac{(\mu_4'-\mu_0)R}{\alpha_2R+\rho_2}.
\end{equation}
Using \eqref{G4.14} in \eqref{sum4.2}, we get
\begin{equation}\label{G4.15}
S=\frac{\alpha_2(\mu_2-\mu_4')+\frac{\rho_2}{R}(\mu_2-\mu_0)}{\alpha_2(\alpha_2+\frac{\rho_2}{R})}>0.
\end{equation}
Hence the equilibrium point $G_4$ exist when $\hat S_2>0$.

 \begin{proposition}\label{prop5}
	Let the equilibrium point $G_4$ exists then
	the following inequality holds for $R$

		 $$0<K\frac{dR}{dK}<\sigma_2+R.$$

\end{proposition}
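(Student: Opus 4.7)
The plan is to mirror the proof of Proposition~\ref{prop4}, replacing the quantities indexed by $1$ (namely $\sigma_1$, $\rho_1$, $\alpha_1$, $\hat S_1$) by the corresponding ones indexed by $2$. The implicit equation \eqref{G4.12} has exactly the same structural form as \eqref{G3.12}, so the same manipulation of the implicit relation should go through with minimal change.

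First, the lower bound $0<K\,dR/dK$ is already recorded in \eqref{Rineq2}. For the upper bound, the key algebraic step is to rewrite $K\,\partial f/\partial K$ in a form that lets us use the defining relation $f(R,K)=0$. Using $\hat S_2=\frac{K}{b}(b-\mu_0)-\sigma_2$, one has $\frac{K}{b}(b-\mu_0)=\hat S_2+\sigma_2$; substituting this into \eqref{G4.13} and regrouping yields
\begin{equation*}
K\frac{\partial f}{\partial K}=f(R,K)+(\sigma_2+R)\left(R+\frac{\rho_2}{\alpha_2}\right),
\end{equation*}
which, since $f(R,K)=0$ at the equilibrium, reduces to $K\,\partial f/\partial K=(\sigma_2+R)(R+\rho_2/\alpha_2)$.

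Next I would bound $-\partial f/\partial R$ from below. By \eqref{4.12.1} together with the sign observation $(\hat S_2-R)+\frac{K}{b}(\mu_0-\mu_4')<0$ that was already used in the existence argument, one obtains $\partial f/\partial R<-R-\rho_2/\alpha_2$. Combining this with the implicit-function formula behind \eqref{Rineq2} gives
\begin{equation*}
K\frac{dR}{dK}=\frac{K\,\partial f/\partial K}{-\partial f/\partial R}<\frac{(\sigma_2+R)(R+\rho_2/\alpha_2)}{R+\rho_2/\alpha_2}=\sigma_2+R,
\end{equation*}
which is the desired bound.

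No serious obstacle is expected, since the computation parallels the $G_3$ case step by step; the only point requiring care is the identity converting $K\,\partial f/\partial K$ into $f+(\sigma_2+R)(R+\rho_2/\alpha_2)$, which depends on the precise form $\hat S_2=\frac{K}{b}(b-\mu_0)-\sigma_2$ and on assumption \eqref{assum2} (to ensure that $(\hat S_2-R)+\frac{K}{b}(\mu_0-\mu_4')$ has the required sign).
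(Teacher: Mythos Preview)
Your proposal is correct and follows essentially the same approach as the paper's proof: both cite \eqref{Rineq2} for the lower bound, establish the identity $K\,\partial f/\partial K=f+(\sigma_2+R)(R+\rho_2/\alpha_2)$ (using $\hat S_2+\sigma_2=\tfrac{K}{b}(b-\mu_0)$), and then bound $-\partial f/\partial R$ from below via \eqref{4.12.1} and the sign of the bracket to obtain the upper bound. If anything, your write-up of the final cancellation is cleaner than the paper's, which contains a minor typographical slip in the bound on $\partial f/\partial R$.
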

\begin{proof}
	The proof for the  first inequality  is given in the case of equilibrium points $G_4$ by \eqref{Rineq2}.
	For the second inequality, it follows from \eqref{G4.13} that
	\begin{equation}
	\begin{split}
	K\frac{\partial f}{\partial K}&=\left((\hat S_2-R)+\frac{K}{b}(\mu_0-\mu_4')+\sigma_2+R\right)R+\frac{\rho_2}{\alpha_2}(\hat S_2-R+\sigma_2+R)=0,\\
	K\frac{\partial f}{\partial K}&= f(R,K)+(\sigma_2+R)(R+\frac{\rho_2}{\alpha_2}).
	\end{split}
	\end{equation}
	Since $ f(R,K)=0,$ we have
	\begin{equation}
	\begin{split}
	K\frac{\partial f}{\partial K}&=(\sigma_2+R)(R+\frac{\rho_2}{\alpha_2}).
	\end{split}
	\end{equation}
	Next, it easily follows from \eqref{4.12.1} that
	\begin{equation}\label{G4.12.2}
	\frac{\partial f}{\partial R}<-R-\frac{\rho_2}{\alpha_2}(\hat S_2-R).
	\end{equation}
	This implies that
	$$-	\frac{\partial f}{\partial R}>R+\frac{\rho_2}{\alpha_2}(\hat S_2-R).$$
	Therefore from \eqref{Rineq2}, we have
	$$K\frac{\partial R}{\partial K}<\sigma_2+R.$$
	
\end{proof}
\subsubsection{Effect of large $K$ on recovered class }
Solving \eqref{eqpt5} , we get

\begin{equation}
-\frac{b}{K}R^2+\frac{b}{K}\left(R^{**}-\frac{\mu_2-\mu_0}{\alpha_2}-\frac{\rho_2}{\alpha_2}\right)R+\frac{b\rho_2}{K\alpha_2}\left(S^{**}-\frac{\mu_2-\mu_0}{\alpha_2}\right)=0
\end{equation}
and some simple arguments shows that
\begin{equation}\label{large R2}
\begin{split}
R= R^{**}+O(1),\\
I_2=\frac{\mu_4'-\mu_0}{\alpha_2}+O(\frac{1}{K}),\\
S=\frac{\mu_2-\mu_4'}{\alpha_1}+O(\frac{1}{K}).
\end{split}
\end{equation}
\subsubsection{Dependence of $I_2$ on K}
Let us consider $I_2$ as a function of $K$, then it follows from \eqref{G4.11} that
\begin{equation}
\frac{dI_2}{dK}=\frac{b}{K^2\alpha_2}(R+\sigma_2-KR')
\end{equation}
Note that $\frac{dI_2}{dK}>0$ due to proposition \eqref{prop5}. Also if $S^{**}=\sigma_2$ then $R=I_2=0.$
Next, using first equation in \eqref{largeR}, we get
\begin{equation}
I_2(K)\rightarrow \frac{b}{K\alpha_2}(S^{**}-R^{**})=\frac{\mu_4'-\mu_0}{\alpha_2}\quad\text{as}\quad K\rightarrow \infty.
\end{equation}
Since the function $I_2$ is monotone, we have
$$I_2(K)<\frac{\mu_4'-\mu_0}{\alpha_2}.$$

\subsubsection{Dependence of $S$ on K}
Consider \eqref{G4.15} and rewrite it as follows
\begin{equation}\label{G4.16}
S=\frac{\alpha_2R(\mu_2-\mu_4')+\rho_2(\mu_2-\mu_0)}{\alpha_2(\alpha_2R+\rho_2)},
\end{equation}
and
\begin{equation}
\frac{dS}{dK}=\frac{\rho_2R'(\mu_0-\mu_4')}{(\alpha_2R+\rho_2)^2}.
\end{equation}
It is evident from proposition \ref{prop5}  and \eqref{assum2} that $\frac{dS}{dK}<0$.
Since $R(\hat K^*)=0,$ then from \eqref{G4.16} it follows that
\begin{equation}\label{G4.17}
S(\hat K^*)=\frac{\mu_2-\mu_0}{\alpha_2}.
\end{equation}
Next, using first equation in \eqref{largeR}, we get
\begin{equation}\label{G4.18}
S\rightarrow\frac{\alpha_2(\mu_2-\mu_4')+\frac{\rho_2}{R^{**}}(\mu_2-\mu_0)}{\alpha_2(\alpha_2+\frac{\rho_2}{R^{**}})}=\frac{\mu_2-\mu_4'}{\alpha_2}\quad\text{as}\quad K\rightarrow \infty.
\end{equation}
Relation \eqref{G4.18} and \eqref{G4.17} gives
$$\frac{\mu_2-\mu_4'}{\alpha_2}<S(K)<\frac{\mu_2-\mu_0}{\alpha_2}.$$
 \subsubsection{Local stability analysis }
 The Jacobian matrix evaluated at $G_4=(S^{*},0,I_2^*,0,R^*)$ is
 \begin{gather*}
 \setlength{\arraycolsep}{.9\arraycolsep}
 \text{\footnotesize$\displaystyle
 	J=
 	\begin{bmatrix}
 	-\frac{b}{K}S^* & -(\alpha_1+\frac{b}{K})S^* & -(\alpha_2+\frac{b}{K})S^* & -(\hat{\alpha_3}+\frac{b}{K})S^* & -\frac{b}{K}S^*  \\
 	0 & b\left(1-\frac{N}{K}\right)+\alpha_1S^*-\gamma_1I_2^*-\mu_1 & 0 & \beta_1S^* & 0 \\
 	(\alpha_2-\frac{b}{K})I_2^* & -(\gamma_2+\frac{b}{K})I_2^* &-\frac{b}{K}I_2^* &  \beta_2S^*-(\frac{b}{K}+\eta_2)I_2^* & -\frac{b}{K}I_2^* \\
 	0 & (\gamma_1+\gamma_2)I_2^* & 0 & b\left(1-\frac{N}{K}\right)+\alpha_3S^*+\eta_2I_2^*-\mu_3 & 0\\
 	-\frac{b}{K}R^*  & \rho_1-\frac{b}{K}R^*  & \rho_2-\frac{b}{K}R^* & \rho_3-\frac{b}{K}R^*  & -\rho_2\frac{I_2^*}{R^*}-\frac{b}{K}R^*
 	\end{bmatrix},
 	$}
 \end{gather*}
 where we partitioned  the Jacobian matrix into one  $2\times2$ and one $3\times3$  blocks with
 \begin{equation}\label{J4.1}
 J_3=
 \begin{bmatrix}
 -\frac{b}{K}S^* & -(\alpha_2+\frac{b}{K})S^* &  -\frac{b}{K}S^* \\
 (\alpha_2-\frac{b}{K})I_2^* & -\frac{b}{K}I_2^* & -\frac{b}{K}I_2^* \\
 -\frac{b}{K}R^* & \rho_2 -\frac{b}{K}R^*& -\rho_2\frac{I_2^*}{R^*}-\frac{b}{K}R^*
 \end{bmatrix},
 \end{equation}
 and
 \begin{equation}\label{J4.2}
 J_4=
 \begin{bmatrix}
 b(1-\frac{N}{K})+\alpha_1S^*-\gamma_1I_2^*-\mu_1 &  \beta_1S^* \\
 (\gamma_1+\gamma_2)I_2^* &  b(1-\frac{N}{K})+\alpha_3S^*+\eta_2I_2^*-\mu_3
 \end{bmatrix},
 \end{equation}
 The stability of the  matrix $J$ is equivalent to the stabliliy of the above two matrices.
 Next, the characterstic polynomial of matrix $J_3$ is given by
 $$\lambda^3+a_2\lambda^2+a_1\lambda+a_0=0$$
 where
 \begin{equation}
 \begin{split}
 a_0&=\rho_2\alpha_2^2\frac{SI_2^2}{R}+\frac{b\alpha_2^2}{K}SI_2R+\frac{b\alpha_2\rho_2}{K}SI_2\\
 a_1&=\alpha_2^2SI_2+\frac{\rho_2b}{K}\frac{I_2^2}{R}+\frac{\rho_2b}{K}I_2+\frac{\rho_2b}{K}\frac{SI_2}{R}\\
 a_2&=\frac{b}{K}N+\frac{\rho_2I_2}{R}
 \end{split}
 \end{equation}
 By the Routh-Hurwitz criterion,  the stability condition is given by
 $$a_2a_1-a_0=\frac{b\alpha_2}{K}(\mu_2'-\mu_0)SI_2+(\frac{b}{K}N+\frac{\rho_2I_2}{R})\left(\frac{\rho_2b}{K}\frac{I_2^2}{R}+\frac{b\rho_2}{K}I_2+\frac{b\rho_2}{K}\frac{SI_2}{R}\right)>0$$
 Therefore  the matrix $J_3$ is stable and the stability of matrix $J$ is equivalent to the stability of matrix $J_4$
 So
 \begin{equation}
 \begin{split}
 \det J_4=\left(\frac{b}{K\alpha_2}(\alpha_2-\alpha_1-\gamma_1)(\hat S_2-R)+\alpha_1(\sigma_2-\sigma_1)\right)\left(\frac{b}{K\alpha_2}(\alpha_2-\alpha_3+\eta_2)(\hat S_2-R)+\alpha_3(\sigma_2-\sigma_3)\right)-\beta_1(\gamma_1+\gamma_2)S^*I_2^*.
 \end{split}
 \end{equation}
 We rewrite the  $\det J_4$ as follows
 \begin{equation}\label{det2.1}
 \Delta_2(K)=\det J_4=P_2Q_2-U_2,
 \end{equation}
 where
 \begin{equation} \label{det2.2}
 \begin{split}
 P_2(K)&=\left(\frac{b}{K\alpha_2}(\alpha_2-\alpha_1-\gamma_1)(\hat S_2-R)+\alpha_1(\sigma_2-\sigma_1)\right),\\
 Q_2(K)&=\left(\frac{b}{K\alpha_2}(\alpha_2-\alpha_3+\eta_2)(\hat S_2-R)+\alpha_3(\sigma_2-\sigma_3)\right),\\
 U_2(K)&=\beta_1(\gamma_1+\gamma_2)S^*I_2^*.
 \end{split}
 \end{equation}
 The sign of the determinant depends on the sign of $P_2 $ and $Q_2. $
 First, we consider
 $\Delta_2(\hat K^*)=\alpha_2\alpha_3(\sigma_2-\sigma_1)(\sigma_3-\sigma_1)<0$ where $\hat K^*==\frac{b\sigma_2}{b-\mu_0}.$ and
 \begin{equation}\label{P2Q2}
 \begin{split}
 P_2(\hat K^*)=\alpha_2(\sigma_1-\sigma_2)>0\quad \text{and}\quad Q_2(\hat K^*)=\alpha_3(\sigma_1-\sigma_3)<0,
 \end{split}
 \end{equation}
 and $\det(J_4)<0.$  Therefore the matrix $J$ is unstable at $K=\hat K^*.$

Let us  assume that $ \delta_3=(\alpha_2-\alpha_1-\gamma_1)<0,$ and $  \delta_4=(\alpha_2-\alpha_3+\eta_2)>0$.
 Then by proposition \eqref{prop5}, we have
 \begin{equation}
 \frac{dP_2}{dK}=\frac{b\delta_3}{K^2\alpha_2}(R+\sigma_2-KR')<0,
 \end{equation}
 \begin{equation}
 \frac{dQ_2}{dK}=\frac{b\delta_4}{K^2\alpha_2}(R+\sigma_2-KR')>0
\end{equation}
and
 \begin{equation}
\frac{dU_2}{dK}=\frac{d}{dK}(\sigma_2I_2-I_2^2)=\frac{b\sigma_2}{K^2\alpha_2}(R+\sigma_2-KR')\biggl(1-\frac{2b}{K\alpha_2}(S^{**}-\sigma_2-R)\bigg)
\end{equation}
Moreover, for large $K \longrightarrow \infty$ , it follows from the first equation of \eqref{large R2} that
\begin{equation}
P_2(\infty)= \frac{\delta_3}{\alpha_2}(\mu_4'-\mu_0)+\alpha_1(\sigma_2-\sigma_1),
\end{equation}
\begin{equation}
Q_2(\infty)= \frac{\delta_3}{\alpha_2}(\mu_4'-\mu_0)+\alpha_1(\sigma_2-\sigma_1).
\end{equation}
and
\begin{equation}
U_2(\infty)= \frac{\beta_1(\gamma_1+\gamma_2)}{\alpha_2^2}(\mu_4'-\mu_0)(\mu_2-\mu_4')
\end{equation}
Since we assume that $\alpha_3<\alpha_2<\alpha_1$, then
\begin{equation}\label{P2inf}
P_2(\infty)=(\mu_4'-\mu_1)+\frac{\alpha_1}{\alpha_2}(\mu_2-\mu_4')-\frac{\gamma_1}{\alpha_2}(\mu_4'-\mu_0)>\mu_2-\mu_1-\frac{\gamma_1}{\alpha_2}(\mu_4'-\mu_0)>0
\end{equation}
\begin{equation}
Q_2(\infty)= (\mu_4'-\mu_3)+\frac{\alpha_3}{\alpha_2}(\mu_2-\mu_4')+\frac{\eta_2}{\alpha_2}(\mu_4'-\mu_0)<\mu_2-\mu_3+\frac{\eta_2}{\alpha_2}(\mu_4'-\mu_0)<0
\end{equation}
It follows from the first equation in \eqref{P2Q2} , from \eqref{P2inf} and monotonicity of $P_2$ that $P_2$ is positive for all $K. $ Since $P_2$ is always positive, to guarantee local stability, $Q_2$ must be negative but in that case $\Delta_2(K)<0.$ This shows that the equilibrium point $G_4$ is always unstable.

\subsection{Coexistence Equilibrium  point $G_5$}
 \subsubsection{Local stability analysis }

 The Jacobian matrix evaluated at $G_5=(S^{*},I_1^*,I_2^*,I_{12}^*,R^*)$ is
\begin{gather*}
\setlength{\arraycolsep}{.9\arraycolsep}
\text{\footnotesize$\displaystyle
	J=
	\begin{bmatrix}
	-\frac{b}{K}S^* & -(\alpha_1+\frac{b}{K})S^* & -(\alpha_2+\frac{b}{K})S^* & -(\hat{\alpha_3}+\frac{b}{K})S^* & -\frac{b}{K}S^*  \\
		(\alpha_1-\frac{b}{K})I_1^* +\beta_1I_{12}^*& -\beta_1\frac{S^*I_{12}^*}{I_1^*}-\frac{b}{K}I_1^*& -(\frac{b}{K}+\gamma_1)I_1^* & \beta_1S^*-(\frac{b}{K}+\eta_1)I_1^* & -\frac{b}{K}I_1^* \\
	(\alpha_2-\frac{b}{K})I_2^*+\beta_2I_{12}^*& -(\frac{b}{K}+\gamma_2)I_2^* & -\beta_2\frac{S^*I_{12}^*}{I_2^*}-\frac{b}{K}I_2^* & \beta_2S^*-(\frac{b}{K}+\eta_2)I_2^* & -\frac{b}{K}I_2^* \\
		(\alpha_3-\frac{b}{K})I_{12}^* & (\eta_1-\frac{b}{K})I_{12}^*+(\gamma_1+\gamma_2)I_2^* & (\eta_2-\frac{b}{K})I_{12}^*+(\gamma_1+\gamma_2)I_1^* & -(\gamma_1+\gamma_2)\frac{I_1^*I_2^*}{I_{12}^*}-\frac{b}{K}I_{12}^* & -\frac{b}{K}I_{12}^* \\
	-\frac{b}{K}R^*  & \rho_1-\frac{b}{K}R^*  & \rho_2-\frac{b}{K}R^* & \rho_3-\frac{b}{K}R^*  & b(1-\frac{N}{K})-\mu_4'-\frac{b}{K}R^*
	\end{bmatrix},
	$}
\end{gather*}
and for $\frac{1}{K}=0$, we can write  the matrix $J$ as follows
\begin{gather*}
\setlength{\arraycolsep}{.9\arraycolsep}
\text{\footnotesize$\displaystyle
	J=
	\begin{bmatrix}
	0 & -\alpha_1S^* & -\alpha_2S^* & -\hat{\alpha_3}S^* & 0  \\
	\alpha_1I_1^* +\beta_1I_{12}^*& -\beta_1\frac{S^*I_{12}^*}{I_1^*} & -\gamma_1I_1^* & \beta_1S^*-\eta_1I_1^* & 0 \\
	\alpha_2I_2^*+\beta_2I_{12}^*& -\gamma_2I_2^* & -\beta_2\frac{S^*I_{12}^*}{I_2^*} & \beta_2S^*-\eta_2I_2^* & 0 \\
	\alpha_3I_{12}^* & \eta_1I_{12}^*+(\gamma_1+\gamma_2)I_2^* & \eta_2I_{12}^*+(\gamma_1+\gamma_2)I_1^* & -(\gamma_1+\gamma_2)\frac{I_1^*I_2^*}{I_{12}^*} & 0 \\
\ast & \ast  & \ast & \ast & b-\mu_4'
	\end{bmatrix},
	$}
\end{gather*}
Next, we consider
\begin{gather*}
\setlength{\arraycolsep}{.9\arraycolsep}
\text{\footnotesize$\displaystyle
	J \sim \tilde{J}=
	\begin{bmatrix}
	0 & -\alpha_1S^* & -\alpha_2S^* & -\hat{\alpha_3}S^*   \\
	\alpha_1I_1^* +\beta_1I_{12}^*& -\beta_1\frac{S^*I_{12}^*}{I_1^*} & -\gamma_1I_1^* & \beta_1S^*-\eta_1I_1^*  \\
	\alpha_2I_2^*+\beta_2I_{12}^*& -\gamma_2I_2^* & -\beta_2\frac{S^*I_{12}^*}{I_2^*} & \beta_2S^*-\eta_2I_2^*  \\
	\alpha_3I_{12}^* & \eta_1I_{12}^*+(\gamma_1+\gamma_2)I_2^* & \eta_2I_{12}^*+(\gamma_1+\gamma_2)I_1^* & -(\gamma_1+\gamma_2)\frac{I_1^*I_2^*}{I_{12}^*}
	\end{bmatrix},
	$}
\end{gather*}

\begin{equation*}
\begin{split}
	\det( \tilde{J})=
	-\beta_1\frac{S^*I_{12}^*}{I_1^*}\begin{vmatrix}
	0  & -\alpha_2S^* & -\hat{\alpha_3}S^*   \\
	\alpha_2I_2^*+\beta_2I_{12}^*&  -\beta_2\frac{S^*I_{12}^*}{I_2^*} & \beta_2S^*-\eta_2I_2^*  \\
	\alpha_3I_{12}^* & \eta_2I_{12}^*+(\gamma_1+\gamma_2)I_1^* & -(\gamma_1+\gamma_2)\frac{I_1^*I_2^*}{I_{12}^*}
	\end{vmatrix}\\
	+\gamma_1I_1\begin{vmatrix}
	0 & -\alpha_1S^*  & -\hat{\alpha_3}S^*   \\
	\alpha_2I_2^*+\beta_2I_{12}^*& -\gamma_2I_2^*  & \beta_2S^*-\eta_2I_2^*  \\
	\alpha_3I_{12}^* & \eta_1I_{12}^*+(\gamma_1+\gamma_2)I_2^* & -(\gamma_1+\gamma_2)\frac{I_1^*I_2^*}{I_{12}^*}
	\end{vmatrix}\\
	-\beta_2\frac{S^*I_{12}^*}{I_2^*}\begin{vmatrix}
	0 & -\alpha_1S^*  & -\hat{\alpha_3}S^*   \\
	\alpha_1I_1^* +\beta_1I_{12}^*& -\beta_1\frac{S^*I_{12}^*}{I_1^*}  & \beta_1S^*-\eta_1I_1^*  \\
	\alpha_3I_{12}^* & \eta_1I_{12}^*+(\gamma_1+\gamma_2)I_2^* &  -(\gamma_1+\gamma_2)\frac{I_1^*I_2^*}{I_{12}^*}
	\end{vmatrix}\\
	+\gamma_2I_2\begin{vmatrix}
	0 & -\alpha_2S^*  & -\hat{\alpha_3}S^*   \\
	\alpha_1I_1^*+\beta_1I_{12}^*& -\gamma_1I_1^*  & \beta_1S^*-\eta_1I_1^*  \\
	\alpha_3I_{12}^* & \eta_2I_{12}^*+(\gamma_1+\gamma_2)I_1^* & -(\gamma_1+\gamma_2)\frac{I_1^*I_2^*}{I_{12}^*}
	\end{vmatrix}\\
	+	\begin{vmatrix}
	0 & -\alpha_1S^* & -\alpha_2S^* & -\hat{\alpha_3}S^*   \\
	\alpha_1I_1^* +\beta_1I_{12}^*& 0& 0& \beta_1S^*-\eta_1I_1^*  \\
	\alpha_2I_2^*+\beta_2I_{12}^*&0 & 0& \beta_2S^*-\eta_2I_2^*  \\
	\alpha_3I_{12}^* & \eta_1I_{12}^*+(\gamma_1+\gamma_2)I_2^* & \eta_2I_{12}^*+(\gamma_1+\gamma_2)I_1^* & -(\gamma_1+\gamma_2)\frac{I_1^*I_2^*}{I_{12}^*}
	\end{vmatrix},
\end{split}
\end{equation*}
\begin{equation*}
\begin{split}
=
-\beta_1\frac{S^*I_{12}^*}{I_1^*}\begin{vmatrix}
0  & -\alpha_2S^* & -\hat{\alpha_3}S^*   \\
\alpha_2I_2^*+\beta_2I_{12}^*&  -\beta_2\frac{S^*I_{12}^*}{I_2^*} & \beta_2S^*-\eta_2I_2^*  \\
\alpha_3I_{12}^* & \eta_2I_{12}^*+(\gamma_1+\gamma_2)I_1^* & -(\gamma_1+\gamma_2)\frac{I_1^*I_2^*}{I_{12}^*}
\end{vmatrix}\\
+\gamma_1I_1\begin{vmatrix}
0 & -\alpha_1S^*  & -\hat{\alpha_3}S^*   \\
\alpha_2I_2^*+\beta_2I_{12}^*& -\gamma_2I_2^*  & \beta_2S^*-\eta_2I_2^*  \\
\alpha_3I_{12}^* & \eta_1I_{12}^*+(\gamma_1+\gamma_2)I_2^* & -(\gamma_1+\gamma_2)\frac{I_1^*I_2^*}{I_{12}^*}
\end{vmatrix}\\
-\beta_2\frac{S^*I_{12}^*}{I_2^*}\begin{vmatrix}
0 & -\alpha_1S^*  & -\hat{\alpha_3}S^*   \\
\alpha_1I_1^* +\beta_1I_{12}^*& -\beta_1\frac{S^*I_{12}^*}{I_1^*}  & \beta_1S^*-\eta_1I_1^*  \\
\alpha_3I_{12}^* & \eta_1I_{12}^*+(\gamma_1+\gamma_2)I_2^* &  -(\gamma_1+\gamma_2)\frac{I_1^*I_2^*}{I_{12}^*}
\end{vmatrix}\\
+\gamma_2I_2\begin{vmatrix}
0 & -\alpha_2S^*  & -\hat{\alpha_3}S^*   \\
\alpha_1I_1^*+\beta_1I_{12}^*& -\gamma_1I_1^*  & \beta_1S^*-\eta_1I_1^*  \\
\alpha_3I_{12}^* & \eta_2I_{12}^*+(\gamma_1+\gamma_2)I_1^* & -(\gamma_1+\gamma_2)\frac{I_1^*I_2^*}{I_{12}^*}
\end{vmatrix}\\
+ \Delta
\end{split}
\end{equation*}
where
\begin{eqnarray*}
&&\Delta=\biggl((\beta_1S^*-\eta_1I_1^*)(\alpha_2I_2^*+\beta_2I_{12}^*)-(\beta_2S^*-\eta_2I_2^*)
(\alpha_1I_1^*+\beta_1I_{12}^*)\biggr)\\
&&\times\biggl((\alpha_1\eta_2-\alpha_2\eta_1)S^*I^*_{12}
+(\gamma_1+\gamma_2)S^*(\alpha_1I_1^*-\alpha_2I_2^*)\biggr)
\end{eqnarray*}

Clearly, if all  $\gamma_i, \beta_j$ are zero then $	\det( \tilde{J})=(\alpha_2\eta_1-\alpha_1\eta_2)^2SI_1I_2I_{12}>0.$

\section{Transition dynamics}

As it has been already shown in papers  \cite{SKTW18a}, \cite{SKTW19} that for  any $K>0$ and admissible choice of parameters, there exist exactly one stable equilibrium point which depends continuously on $K$ and those parameters. One can expect the similar transition dynamics in this model as well. Biologically the transition dynamics shows the relationship between transmission parameters and characteristic of corresponding equilibrium state. It also shows that how the transition of equilibrium points depends on carrying capacity which means the invasion of of pathogens increases with growing $K$.
In this study we have the following two possible scenarios under the assumptions \eqref{trans1}
\begin{itemize}
	\item[(i)]  $G_2\longrightarrow G_3$
 	\item[(ii)] $G_2\longrightarrow G_3 \longrightarrow G_5$
 \end{itemize}
The first case exist when \eqref{transcrit} holds.
In that case small $K$ there is a stable disease free equilibrium point but after a threshold, when $K$ continue to increase, that equilibrium point become unstable and the other equilibrium point $G_3$ which corresponds to the existence of only one infection becomes stable. In next case we start with stable disease free equilibrium point for small $K$ and as K increases it bifurcates  from $G_3$ point to an inner point  $G_5$.
This biologically means that after a certain threshold value of carrying capacity disease can invade and  persist in population. This continuity of transition dynamics will hold true for small $ \beta_i$ and $\gamma_j.$

\section{Global stability analysis and some basic properties }
\subsection{Basic reproduction ratio for disease free equilibrium point $G_2=(S^{**},0,0,0,0)$}
The basic reproduction number is the key parameter in epidemiology. It shows the number of secondary cases which one case would produce in a  susceptible population. If $R_0 > 1$, the disease can persist in the population and if $R_0 < 1, $ it will go extinct.
Therefore it is important to estimate  the value of $R_0 $ in a particular population.

In our first model the basic reproductive number is largest value among the following three. i.e  $$ R_0=\max \{\lambda_1,\lambda_2,\lambda_3\}=\max\{R_{01},R_{02},R_{03}\} $$
where
\begin{equation}
\begin{split}
R_{01}=\frac{S^{**}}{b-\mu_1}(\frac{b}{K}-\alpha_1),\\
R_{02}=\frac{S^{**}}{b-\mu_2}(\frac{b}{K}-\alpha_2),\\
R_{03}=\frac{S^{**}}{b-\mu_3}(\frac{b}{K}-\alpha_3).
\end{split}
\end{equation}

\subsection{Global stability analysis for $G_2$}
In this section we formulate a global stability result concerning the disease free equilibrium point which guarantees that the disease can not invade and go extinct in small populations.
\begin{proposition}\label{G.SG2}
Let \eqref{assum2} holds and
$$S^{**}< \min\left({\frac{\mu_1'-\mu_0}{\alpha_1},\frac{\mu_2'-\mu_0}{\alpha_2},\frac{\mu_3'-\mu_0}{\hat \alpha_3}}\right)$$
	
	then the equilibrium point $G_2$ is globally asymptotically stable i.e
	\begin{align*}
	\lim_{t\to\infty}I_1(t)&=
	\lim_{t\to\infty}I_2(t)=
	\lim_{t\to\infty}I_{12}(t)=
	\lim_{t\to\infty}R(t)=0,\\
	\lim_{t\to\infty} S(t)&= S^{**},
	\end{align*}
\end{proposition}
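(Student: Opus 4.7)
The plan is to use $V:=I_1+I_2+I_{12}$ as a scalar Lyapunov-like quantity and argue in cascade: first $V(t)\to 0$, then $R(t)\to 0$, and finally $S(t)\to S^{**}$. The starting observation is that when one sums the $I_1',I_2',I_{12}'$ equations in \eqref{Model}, the cross-terms with coefficients $\eta_1,\eta_2$ cancel pairwise (between $-\eta_i I_k I_{12}$ in $I_k'$ and $+\eta_i I_k I_{12}$ in $I_{12}'$), the three $\gamma$-terms in $I_1I_2$ cancel, and the $SI_{12}$ contributions combine into $(\alpha_3+\beta_1+\beta_2)SI_{12}=\hat\alpha_3 SI_{12}$. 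This gives
\[
V' = V\bigl(b(1-N/K)-\mu_0\bigr) + \sum_{k=1}^{3}I_k\bigl(\tilde\alpha_k S-(\mu_k-\mu_0)\bigr),
\]
with $\tilde\alpha_1=\alpha_1$, $\tilde\alpha_2=\alpha_2$, $\tilde\alpha_3=\hat\alpha_3$.

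Writing $\mu_k=\mu_k'+\rho_k$ and using the hypothesis, set $\delta_k:=\mu_k'-\mu_0-\tilde\alpha_k S^{**}>0$; then $\tilde\alpha_k S^{**}-(\mu_k-\mu_0)=-(\delta_k+\rho_k)<0$. By \eqref{limit} I fix $T$ so that $S(t)\leq S^{**}+\epsilon$ for $t\geq T$, and then
\[
\sum_{k}I_k\bigl(\tilde\alpha_k S-(\mu_k-\mu_0)\bigr)\leq -\tfrac{c_0}{2}V,\qquad c_0:=\min_k(\delta_k+\rho_k),
\]
provided $\epsilon$ is small. The troublesome contribution is $\tfrac{b}{K}V(S^{**}-N)$, which is positive whenever $N<S^{**}$. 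To absorb it I would use the lower bound \eqref{boundr3.1} on $N$, which yields $b(1-N/K)-\mu_0\leq\mu_3'-\mu_0+O(\epsilon)$ for large $t$, and hence
\[
V'\leq \bigl(\mu_3'-\mu_0+O(\epsilon)-\tfrac{c_0}{2}\bigr)V.
\]
If the quantitative margin $c_0>\mu_3'-\mu_0$ can be extracted from the hypothesis, Gr\"onwall delivers $V\to 0$ exponentially. If the margin is insufficient, I would instead invoke LaSalle's invariance principle, leveraging the fact shown in Section~\ref{eqpoint} that under $S^{**}<\sigma_1'<\sigma_1$ the equilibria $G_3,G_4,G_5$ do not exist, so the only $\omega$-limit set compatible with the sink estimate is the singleton $\{G_2\}$.

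Once $V(t)\to 0$ is proved, the $R$-equation reads $R'=R(b(1-N/K)-\mu_4')+o(1)$; as $V\to 0$ and $N\to S\to S^{**}$, the factor $b(1-N/K)-\mu_4'\to \mu_0-\mu_4'<0$ by \eqref{assum2}, forcing $R(t)\to 0$. Then $N=S+V+R\to S$ and the $S$-equation reduces asymptotically to the logistic $S'\approx \tfrac{b}{K}S(S^{**}-S)$, so $S(t)\to S^{**}$. I expect the main obstacle to lie entirely in controlling the birth term $\tfrac{b}{K}V(S^{**}-N)$ in the $V$-equation: it is not pointwise dominated by the sink unless the extra margin $c_0>\mu_3'-\mu_0$ is available, and in the residual regime one must close the argument either through a coupled Lyapunov (for instance, adding a multiple of $S-S^{**}\ln S$) or through the LaSalle route indicated above, which I expect to be the most delicate ingredient.
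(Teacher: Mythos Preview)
Your primary route via $V=I_1+I_2+I_{12}$ has precisely the gap you flag: the birth contribution $\frac{b}{K}V(S^{**}-N)$ is not controlled by the sink under the stated hypothesis. Your Gr\"onwall repair would need $c_0>\mu_3'-\mu_0$, but the assumption $S^{**}<(\mu_k'-\mu_0)/\tilde\alpha_k$ only yields $\delta_k>0$, not any comparison of $\delta_k+\rho_k$ with $\mu_3'-\mu_0$, so this extra margin is simply not available in general. Your LaSalle fallback is also not well-founded as written: LaSalle's principle needs a non-increasing scalar along trajectories, which you have not produced, and merely ruling out the equilibria $G_3,G_4,G_5$ does not exclude periodic orbits or more complicated $\omega$-limit sets.

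The fix you mention parenthetically---``adding a multiple of $S-S^{**}\ln S$''---is exactly what the paper does, and it resolves the difficulty cleanly rather than being a last resort. With
\[
v(t)=S-S^{**}\ln S+I_1+I_2+I_{12}+R,
\]
the term $\frac{S-S^{**}}{S}S'$ contributes $(S-S^{**})\bigl(b(1-N/K)-\mu_0\bigr)=\frac{b}{K}(S-S^{**})(S^{**}-N)$ together with $-(S-S^{**})(\alpha_1 I_1+\alpha_2 I_2+\hat\alpha_3 I_{12})$. Combined with the $b(1-N/K)(N-S)$ piece coming from $I_1'+I_2'+I_{12}'+R'$, these birth terms collapse to the perfect square $-\frac{b}{K}(N-S^{**})^2$, and the $S$-factor in front of the infection rates is replaced by $S^{**}$. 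One obtains
\[
v'=-\frac{b}{K}(N-S^{**})^2-\sum_{k}\bigl(\mu_k'-\mu_0-\tilde\alpha_k S^{**}\bigr)I_k-(\mu_4'-\mu_0)R\leq 0,
\]
which is nonpositive under exactly the stated hypothesis and \eqref{assum2}. From here a standard invariance/Barbalat argument gives $N\to S^{**}$, $I_k\to 0$, $R\to 0$, and hence $S\to S^{**}$. So the ``coupled Lyapunov'' you mention in passing is not an optional refinement but the actual mechanism; your cascade afterthoughts about $R$ and $S$ then become unnecessary.
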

\begin{proof}
	We define a Lyapunov function $v$ such that
	\begin{equation}\label{veq1.1}
	v(t):=S- S^{**}\ln S+I_1+I_2+I_{12}+R.
	\end{equation}
	\begin{equation}
	v(t)':=\frac{(S- S^{**})}{S}S'+I_1'+I_2'+I_{12}'+R',
	\end{equation}
		Differentiating the above equation  and combining with \eqref{Model} yields
	\begin{eqnarray}\label{veq1.2}
&&	v'(t)=-\frac{b}{K}(N-S^{**})^2-\left(\alpha_1S^{**} -(\mu_1'-\mu_0)\right)I_1-\left(\alpha_2S^{**}-(\mu_2'-\mu_0)\right) I_2 \nonumber\\ &&-\left(\hat\alpha_3S^{**}-(\mu_3'-\mu_0)\right)I_{12}-(\mu_4'-\mu_0)R.
	\end{eqnarray}
 Then using \eqref{assum2}, we have
	$v'(t)<0$, i.e. $v(t)$ is an decreasing function  Since $S$, $I_1$, $I_2$, $I_{12},R$ are bounded positive functions, $v(t)$ is bounded too.
	Therefore by Lemma~2 in \cite{SKTW18a}, $N=S$ converges to $S^*$ and $I_1,I_2$, $I_{12}$ converge to zero.	
\end{proof}

\section{Discussion}

In this paper we discussed four model with coinfection to analyse the effect of four different phenomenon. In the first model we considered an SIR model with coinfection of two disease or strains,. We assumed that the growth of each class is regulated by the carrying capacity. Traditionally, the term carrying capacity have  meaning as food resource since food can be the limiting factor for population growth but it can not be necessarily be the food in the case of disease there could be numerous other factors which can limit the population growth (exposure to disease etc.). The density dependence in each class plays an important role in disease dynamics. First in this case we have five equilibrium points depends on $K$. We analysed all equilibrium points and have the local stability analysis. An important scenario arises when one equilibrium point loses its stability after a certain threshold value of K and bifurcates into another point. This is the bifurcation from  disease free state to the states where  at least one of the infected class is non zero. Contrary to \cite{SKTW18a,SKTW19}, we considered the recovered class as well to see the effect of carrying capacity on recovered population. Although including the density dependence factor usually provide sufficient control of population growth and to keep population size bounded. But here we  have observed that the recovered population is not uniformly bounded with respect to K. As K approaches to infinity, recovered class also goes to infinity. So if disease persist in the population then for large K then relaxing the assumption of recovered class is completely immune will gives more interesting results.
Moreover there are three stable equilibrium points for different values of $K$. The dynamics here corresponds to the transition dynamics in \cite{SKTW18a,SKTW19}, which means as $K$ increases, we move form one equilibrium point to another one. It is observed that, similar to previous cases, the dynamics depends on K. When K is small, the disease can not invade in the population and go extinct quickly due to the small size of population.


\begin{thebibliography}{AFT}


\bibitem{Alemu} Alemu, Abebe, Shiferaw, Yitayal, Addis, Zelalem, Mathewos, Biniam and Birhan, Wubet,
Effect of malaria on HIV/AIDS transmission and progression, Parasites and vectors, 6, 1, 2013.
	


\bibitem{Allen} Allen, Linda JS, Langlais, Michel and Phillips, Carleton J,
	The dynamics of two viral infections in a single host population with applications to hantavirus,
	Math. Biosci., 186, 2, 2003.
	




\bibitem{Blyuss} Blyuss, Konstantin B and Kyrychko, Yuliya N, On a basic model of a two-disease epidemic,
	Applied Mathematics and Computation, 160, 1, 2005.
	


\bibitem{Castillo2} Castillo-Chavez, Carlos, Huang, Wenzhang and Li, Jia, Competitive exclusion in gonorrhea models and other sexually
	transmitted diseases, SIAM J. Appl. Math., 56, 2, 1996.
	
\bibitem{Chaves} Chaves, Barbara Aparecida, Orfano, Alessandra Silva, Nogueira, Paula Monalisa, Barnabe Rodrigues, Nilton, Campolina, Thais Bonifácio, Nacif-Pimenta and others, Coinfection with Zika and Dengue Viruses results in preferential Zika transmission by vector bite to vertebrate host, J. Inf. Diseases, 2018.
	
	

\bibitem{SKTW18a} Ghersheen, Samia, Kozlov, Vladimir, Tkachev, Vladimir  and Wennergren, Uno, Dynamical behaviour of SIR model with coinfection: the case of finite carrying capacity,
	Math. Meth. Appl. Sci., 42, 8, 2019.
	


	


\bibitem{SKTW19} Ghersheen, Samia, Kozlov, Vladimir, Tkachev, Vladimir and Wennergren, Uno,
	Mathematical analysis of complex SIR model with coinfection and density dependence,
	Comp.  Math.  Meth., 1, 4, 2019.
	
\bibitem{SKTW21a} Andersson, Jonathan, Ghersheen, Samia, Kozlov, Vladimir, Tkachev, Vladimir  and Wennergren, Uno,
Effect of density dependence on coinfection dynamics,
	Analysis and Mathematical Physics, 11, 2021.

\bibitem{SKTW21b} Andersson, Jonathan, Ghersheen, Samia, Kozlov, Vladimir, Tkachev, Vladimir  and Wennergren, Uno,
Effect of density dependence on coinfection dynamics: part 2,
	Analysis and Mathematical Physics, 11, 2021.


\bibitem{Gupta} Ferguson, Neil, Anderson, Roy and Gupta, Sunetra,
	The effect of antibody-dependent enhancement on the transmission dynamics and persistence of multiple-strain pathogens,
	Proceedings of the National Academy of Sciences, 96, 2, 1999.
	



\bibitem{Kawaguchi} Kawaguchi, Isao, Sasaki, Akira and Boots, Michael, Why are dengue virus serotypes so distantly related? Enhancement and limiting serotype similarity between dengue virus strains,
	Proc. Royal Soc.  London B: Biol. Sci., 270, 1530, 2003.
	











\bibitem{Martcheva} Martcheva, Maia and Pilyugin, Sergei S.,
	The role of coinfection in multidisease dynamics, SIAM J. Appl. Math., 66, 3, 2006.
	

\end{thebibliography}
\end{document}